\newcommand{\short}[1]{}
\newcommand{\extended}[1]{#1}
\author{Leonidas Fegaras}
\affiliation{%
\institution{University of Texas at Arlington}}
\email{fegaras@cse.uta.edu}
\author{Md Hasanuzzaman Noor}
\affiliation{%
\institution{University of Texas at Arlington}}
\email{mdhasanuzzaman.noor@mavs.uta.edu}
\newcommand{\ignore}[1]{}
\newcommand{\skiptext}[1]{\hphantom{#1}}
\newcommand{\gt}{\mbox{$\rangle$}}
\newcommand{\lt}{\mbox{$\langle$}}
\newcommand{\s}[1]{\textsf{#1}}
\newcommand{\app}{\dplus}
\newcommand{\gb}{\Updownarrow}
\newcommand{\sem}[1]{\llbracket #1\rrbracket}
\renewcommand{\bar}[1]{\overline{#1}}
\newcommand{\lm}{\lambda}
\newcommand{\sepc}{\mbox{$\pmb{|}$}}
\newcommand{\from}{\leftarrow}
\newcommand{\compr}[2]{\pmb{\{}\, #1\; \sepc\; #2\,\pmb{\}}}
\newcommand{\ocompr}[2]{\pmb{\{}\, #1\; \sepc\; #2\,\pmb{\}}}
\newcommand{\comprr}[1]{\pmb{\{}\, #1\; \sepc\;}
\newcommand{\bag}[1]{\{\!\!\!\{#1\}\!\!\!\}}
\newcommand{\liste}[1]{[\, #1\,]}
\newcommand{\pluseq}[1]{\,\scalebox{0.85}{$#1$}\!\!=}
\newcommand{\opluseq}{\pluseq{\oplus}}
\newcommand{\req}{\scalebox{0.8}{=}\,}
\newcommand{\some}[1]{\pmb{\{}\,#1\,\pmb{\}}}
\newcommand{\why}[1]{\hspace*{6ex}\mbox{{\em (#1)}}\\}
\newtheorem{theorem}{Theorem}[section]
\newtheorem{definition}{Definition}[section]
\newenvironment{customthm}[1]
  {\innercustomthm}
  {\endinnercustomthm}
\DeclareRobustCommand\dplus{\mathbin{+\!\!+}}
\lstdefinelanguage{scala}{
  morekeywords={abstract,case,catch,class,def,%
    do,else,extends,false,final,finally,%
    for,if,implicit,import,match,mixin,%
    new,null,object,override,package,%
    private,protected,requires,return,sealed,%
    super,this,throw,trait,true,try,%
    type,until,val,var,while,with,yield},
  otherkeywords={=>,<-,<\%,<:,>:,\#,@},
  sensitive=true,
  basicstyle=\sf,
  morecomment=[l]{//},
  morecomment=[n]{/*}{*/},
  morestring=[b]",
  morestring=[b]',
  morestring=[b]"""
}
\title{Translation of Array-Based Loops to Distributed Data-Parallel Programs}
\newcommand{\myabstract}{
Large volumes of data generated by scientific experiments and
simulations come in the form of arrays, while programs that analyze
these data are frequently expressed in terms of array operations in an
imperative, loop-based language. But, as datasets grow larger, new
frameworks in distributed Big Data analytics have become essential
tools to large-scale scientific computing. Scientists, who are
typically comfortable with numerical analysis tools but are not
familiar with the intricacies of Big Data analytics, must now learn
to convert their loop-based programs to distributed data-parallel
programs. We present a novel framework for translating programs expressed as
array-based loops to distributed data parallel programs that is more
general and efficient than related work.\extended{ Although our translations are
over sparse arrays, we extend our framework to handle packed arrays,
such as tiled matrices, without sacrificing performance.} We
report on a prototype implementation on top of Spark and evaluate the
performance of our system relative to hand-written programs.}
\begin{abstract}\myabstract\end{abstract}
\begin{document}\maketitle}
\short{\begin{document}\maketitle\begin{abstract}\myabstract\end{abstract}}


\section{Introduction}
\renewcommand{\topfraction}{0.9}	
\renewcommand{\bottomfraction}{0.8}	
\setcounter{topnumber}{2}
\setcounter{bottomnumber}{2}
\setcounter{totalnumber}{2}     
\setcounter{dbltopnumber}{2}    
\renewcommand{\dbltopfraction}{0.9}	
\renewcommand{\textfraction}{0.07}	
\renewcommand{\floatpagefraction}{0.9}	
\renewcommand{\dblfloatpagefraction}{0.9}

Most data used in scientific computing and machine learning come in
the form of arrays, such as vectors, matrices and tensors, while
programs that analyze these data are frequently expressed in terms of
array operations in an imperative, loop-based language.  These loops
are inherently sequential since they iterate over these collections by
accessing their elements randomly, one at a time, using array
indexing.  Current scientific applications must analyze enormous
volumes of array data using complex mathematical data processing
methods.  As datasets grow larger and data analysis computations
become more complex, programs written with array-based loops must now
be rewritten to run on parallel or distributed architectures. Most
scientists though are comfortable with numerical analysis tools, such
as MatLab, and with certain imperative languages, such as FORTRAN and
C, to express their array-based computations using algorithms found in
standard data analysis textbooks, but are not familiar with the
intricacies of parallel and distributed computing.  Because of the
prevalence of array-based programs, a considerable effort has been
made to automatically parallelize these loops. Most automated
parallelization methods in High Performance Computing (HPC) exploit
loop-level parallelism by using multiple threads to access the indexed
data in a loop in parallel. But indexed array values that are updated
in one loop step may be used in the next steps, thus creating
loop-carried dependencies, called recurrences. The presence of such
dependencies complicates the parallelization of a loop. DOALL
parallelization~\cite{doall-kavi} identifies and parallelizes loops
that do not have any recurrences, that is, when statements within a
loop can be executed independently. Although there is a substantial
body of work on automated parallelization on shared-memory
architectures in HPC, there is very little work done on applying these
techniques to the new emerging distributed systems for Big Data
analysis (with the notable exceptions of MOLD~\cite{mold:oopsla14}
and \textsc{Casper}~\cite{casper:sigmod18}).

In recent years, new frameworks in distributed Big Data analytics have
become essential tools for large-scale machine learning and scientific
discoveries. These systems, which are also known as Data-Intensive
Scalable Computing (DISC) systems, have revolutionized our ability to
analyze Big Data.  Unlike HPC systems, which are mainly for
shared-memory architectures, DISC systems are distributed
data-parallel systems on clusters of shared-nothing computers
connected through a high-speed network. One of the earliest DISC
systems is Map-Reduce~\cite{dean:osdi04}, which was introduced by
Google and later became popular as an open-source software with Apache
Hadoop~\cite{hadoop}.\extended{ For each Map-Reduce job, one needs to provide
two functions: a map and a reduce. The map function specifies how to
process a single key-value pair to generate a set of intermediate
key-value pairs, while the reduce function specifies how to combine
all intermediate values associated with the same key. The Map-Reduce
framework uses the map function to process the input key-value pairs
in parallel by partitioning the data across a number of compute nodes
in a cluster. Then, the map results are shuffled across a number of
compute nodes so that values associated with the same key are grouped
and processed by the same compute node.} Recent DISC systems, such as
Apache Spark~\cite{spark} and Apache Flink~\cite{flink}, go beyond
Map-Reduce by maintaining dataset partitions in the memory of the
compute nodes. Essentially, in their core, these systems remain
Map-Reduce systems but they provide rich APIs that implement many
complex operations used in data analysis and support libraries for
graph analysis and machine learning.

The goal of this paper is to design and implement a framework that
translates array-based loops to DISC operations. Not only do these
generated DISC programs have to be semantically equivalent to their
original imperative counterparts, but they must also be nearly as
efficient as programs written by hand by an expert in DISC systems.
If successful, in addition to parallelizing legacy imperative code,
such a translation scheme would offer an alternative and more
conventional way of developing new DISC applications.

DISC systems use data shuffling to exchange data among compute nodes,
which takes place implicitly between the map and reduce stages in Map-Reduce and
during group-bys and joins in Spark and Flink. Essentially, all data
exchanges across compute nodes are done in a controlled way using DISC
operations, which implement data shuffling by distributing data based on
some key, so that data associated with the same key are processed
together by the same compute node. Our goal is to leverage this idea
of data shuffling by collecting the cumulative effects of updates at
each memory location across loop iterations and apply these effects in
bulk to all memory locations using DISC operations.  This idea was
first introduced in MOLD~\cite{mold:oopsla14}, but our goal is to
design a general framework to translate loop-based programs using
compositional rules that transform programs piece-wise, without having
to search for program templates to match (as in
MOLD~\cite{mold:oopsla14}) or having to use a program synthesizer (as
in \textsc{Casper}~\cite{casper:sigmod18}).

Consider, for example, the incremental update
$A[e]\pluseq{+}v$ in a loop, for a sparse vector $A$.  
The cumulative effects of all these updates throughout the loop can be performed in bulk by
grouping the values $v$ across all loop iterations by the array index
$e$ (that is, by the different destination locations) and by summing
up these values for each group.
Then the entire vector $A$ can be replaced with these new values.
For instance, assuming that the values of $C$ were zero before the loop, the following program
\begin{lstlisting}[language=java]
for i = 0, 9 do
    C[A[i].K] += A[i].V
\end{lstlisting}
can be evaluated in bulk by grouping the elements $A[i]$ of the vector
$A$ by $A[i].K$ (the group-by key), and summing up all the values
$A[i].V$ associated with each different group-by key. Then the
resulting key-sum pairs are the new values for the vector $C$.
If the sparse vectors $C$ and $A$ are represented as relational tables
with schemas $(I,V)$ and $(I,K,V)$, respectively, then the new values of $C$ can be
calculated as follows in SQL:
\begin{lstlisting}[language=SQL]
insert into C select A.K as I, sum(A.V) as V
              from A group by A.K
\end{lstlisting}
For example, from $A$ on the left we get $C$ on the right:
\begin{center}
{\small\begin{tabular}{|l|l|}\hline
\multicolumn{1}{|c|}{$A(I,K,V)$} & \multicolumn{1}{|c|}{$C(I,V)$}\\\hline\hline
(3,3,10) & (3,23)\\
(8,5,25) & (5,25)\\
(5,3,13)&\\\hline
\end{tabular}}
\end{center}
These results are consistent with the outcome of the loop, which can be unrolled to the updates
\lstinline$C[3]+=10; C[3]+=13;$ \lstinline$C[5]+=25$.

Instead of SQL, our framework uses monoid comprehensions~\cite{jfp17},
which resemble SQL but have less syntactic sugar and are more concise.
Our framework translates the previous
loop-based program to the following bulk assignment that calculates
all the values of $C$ using a bag comprehension
that returns a bag of index-value pairs:
\[\begin{array}{lcl}
C&:=&\compr{(k,+/v)}{(i,k,v)\from A,\,\textbf{group\ by}\;k}
\end{array}\]
A group-by operation in a comprehension lifts each pattern variable
defined before the group-by (except the group-by keys) from some type
$t$ to a bag of $t$, indicating that each such variable must now
contain all the values associated with the same group-by key
value. Consequently, after we group by $k$, the variable $v$ is lifted
to a bag of values, one bag for each different $k$.  In the
comprehension result, the aggregation $+/v$ sums up all the values in
the bag $v$, thus deriving the new values of $C$ for each index $k$.

A more challenging example, which is used as a running example throughout this paper,
is the product $R$ of two square matrices $M$ and $N$
such that $R_{ij}=\sum_k M_{ik}*N_{kj}$.  It can be expressed as
follows in a loop-based language:
\begin{lstlisting}[language=java]
for i = 0, d-1 do
    for j = 0, d-1 do {
        R[i,j] := 0;
        for k = 0, d-1 do
            R[i,j] += M[i,k]*N[k,j]  }
\end{lstlisting}
A sparse matrix $M$ can be represented as a bag of tuples
$(i,j,v)$ such that $v=M_{ij}$. 
This program too can be translated to a single assignment
that replaces the entire content of the matrix $R$ with a new
content, which is calculated using bulk relational operations.
More specifically, if a sparse matrix is implemented as a
relational table with schema \s{(I,J,V)}, matrix multiplication
between the tables $M$ and $N$ can be expressed as follows in SQL:
\begin{lstlisting}[language=SQL]
select M.I, N.J, sum(M.V*N.V) as V
from M join N on M.J=N.I group by M.I, N.J
\end{lstlisting}
As in the previous example, instead of SQL, our framework uses a comprehension
and translates the loop-based program for matrix multiplication to the following assignment:
\[\begin{array}{lcl}
R&:=&\compr{ (i,j,+/v) }{ (i,k,m) \from M,\, (k',j,n) \from N,\, k=k',\\
&&\skiptext{\comprr{ (i,j,+/v) }}\textbf{let}\;v = m*n,\,\textbf{group\ by}\; (i,j) }
\end{array}\]
Here, the comprehension retrieves the values $M_{ik}\in M$ and $N_{kj}\in N$ as triples
$(i,k,m)$ and $(k',j,n)$ so that $k=k'$, and sets
$v=m*n=M_{ik}*N_{kj}$. 
After we group the values by the matrix indexes
$i$ and $j$, the variable $v$ is lifted to a bag of numerical values
$M_{ik}*N_{kj}$, for all $k$. Hence, the aggregation $+/v$ will sum up
all the values in the bag $v$, deriving $\sum_k M_{ik}*N_{kj}$ for the
$ij$ element of the resulting matrix. If we ignore non-shuffling operations, this comprehension is equivalent
to a join between $M$ and $N$ followed by a reduceByKey
operation in Spark.

\subsection{Highlights of our Approach}

Our framework translates a loop-based program in pieces, in a bottom-up fashion
over the abstract syntax tree (AST) representation of the program, by
translating every AST node to a comprehension.
Matrix indexing is translated as follows:
\[M[i,j] = \compr{ m }{ (I,J,m) \from M,\, I=i,\, J=j }.\]
If $M_{ij}$ exists, it will return the singleton bag $\bag{M_{ij}}$,
otherwise, it will return the empty bag. Since any matrix access that
normally returns a value of $t$ is lifted to a comprehension that
returns a bag of $t$, every term in the loop-based program must be
lifted in the same way. For example, the integer multiplication $A*B$ must be
lifted to the comprehension $\compr{ a*b }{ a \from A,\, b \from B }$
over the two bags $A$ and $B$ (the lifted operands) that
returns a bag (the lifted result).
Consequently, the term $M[i,k]*N[k,j]$\extended{in matrix multiplication}
is translated to:
\[\begin{array}{l}
\compr{a*b}{a\from\compr{m}{ (I,J,m) \from M,\, I=i,\, J=k },\\
\skiptext{\comprr{a*b}}b\from\compr{n}{ (I,J,n) \from N,\, I=k,\, J=j }},
\end{array}\]
which, after unnesting the nested comprehensions and renaming some variables, is normalized to:
\[\begin{array}{l}
\compr{m*n}{(I,J,m) \from M,\, I=i,\, J=k,\\
\skiptext{\comprr{m*n}}(I',J',n) \from N,\, I'=k,\, J'=j },
\end{array}\]
which is equivalent to a join between $M$ and $N$.

Incremental updates, such as $R[i,j] \pluseq{+} M[i,k]*N[k,j]$ in matrix
multiplication, accumulate their values across iterations, hence they
must be considered in conjunction with iterations. Consider 
the following loop, where $f(k)$, $g(k)$, and $h(k)$ are terms that may depend
on $k$:
\begin{lstlisting}[language=java]
for k = 0, 99 do #$M[f(k),g(k)] \pluseq{+} h(k)$#
\end{lstlisting}
Suppose now that there are two values, $k_1$ and $k_2\not= k_1$,
that have the same image under both $f$ and $g$, that is, when
$f(k_1)=f(k_2)$ and $g(k_1)=g(k_2)$. Then, $h(k_1)$ and $h(k_2)$
should be aggregated together. In general, we need to bring together
all values $h(k)$ that have the same values for $f(k)$ and
$g(k)$. That is, we need to group by $f(k)$ and $g(k)$ and sum up all
$h(k)$ in each group. This is accomplished by the
comprehension:
\[\begin{array}{l}
\compr{(\,i,\,j,\,+/v\,)}{k\from\textrm{range}(0,99),\,v\from h(k),\\
\hspace*{6ex}\textbf{let}\;i=f(k),\,\textbf{let}\;j=g(k),\;
\textbf{group\ by}\;(i,j)},
\end{array}\]
where $k\from\textrm{range}(0,99)$ is an iterator that corresponds to
the for-loop and the summation $+/v$ sums up all $h(k)$
that correspond to the same indexes $f(k)$ and $g(k)$.

If we apply this method to $R[i,j] \pluseq{+} M[i,k]*N[k,j]$,
which is embedded in a triple-nested loop, we derive:
\[\begin{array}{l}
\compr{(\,i,\,j,\,+/v\,)}{i\from\textrm{range}(0,d-1),\\
\hspace*{5ex}j\from\textrm{range}(0,d-1),\,k\from\textrm{range}(0,d-1),\\
\hspace*{5ex}v\from M[i,k]*N[k,j],\,\textbf{group\ by}\;(i,j)}.
\end{array}\]
After replacing $M[i,k]*N[k,j]$ and unnesting the nested comprehensions, we get:
\[\begin{array}{l}
\compr{(\,i,\,j,\,+/v\,)}{i\from\textrm{range}(0,d-1),\\
\hspace*{3ex}j\from\textrm{range}(0,d-1),\,k\from\textrm{range}(0,d-1),\\
\hspace*{3ex}(I,J,m) \from M,\, I=i,\, J=k,\,(I',J',n) \from N,\, I'=k,\\
\hspace*{3ex}J'=j,\,\textbf{let}\;v=m*n,\,\textbf{group\ by}\;(i,j)}.
\end{array}\]
Joins between a for-loop and a matrix traversal, such as 
\[i\from\textrm{range}(0,d-1),\,(I,J,m) \from M,\, I=i,\]
can be optimized to a matrix traversal, such as 
\[(i,J,m) \from M,\,\textrm{inRange}(i,0,d-1),\]
where the predicate $\mathrm{inRange}(i,0,d-1)$ returns true if
$0\leq i\leq d-1$. 
Based on this optimization, the previous comprehension becomes:
\[\begin{array}{l}
\compr{(\,i,\,j,\,+/v\,)}{ (i,k,m) \from M,\,\textrm{inRange}(i,0,d-1),\\
\skiptext{\comprr{(\,i,\,j,\,+/v\,)}}(k',j,n) \from N,\,k=k',\\
\skiptext{\comprr{(\,i,\,j,\,+/v\,)}}\textbf{let}\;v = m*n,\,\textbf{group\ by}\; (i,j) },
\end{array}\]
which is the desired translation of matrix multiplication.

We present a novel framework for translating array-based loops to DISC programs
using simple compositional rules that translate these loops piece-wise.
Our framework translates an array-based loop to a semantically
equivalent DISC program as long as this loop satisfies some simple
syntactic restrictions, which are more permissive than the
recurrence restrictions imposed by many current systems and can be statically checked at compile-time.
For a loop to be parallelizable, many systems
require that an array should not be both read and updated
in the same loop.  For example, they reject the update $V[i]:=(V[i-1]+V[i+1])/2$ inside a
loop over $i$ because $V$ is read and updated in the same
loop.  But they also reject incremental updates, such as
$V[i]\pluseq{+}1$, because such an update reads from and writes to the
same vector $V$.  Our framework relaxes these restrictions by
accepting incremental updates of the form $V[e_1]\opluseq e_2$ in a
loop, for some commutative operation $\oplus$ and for some terms $e_1$
and $e_2$ that may contain arbitrary array operations, as long as
there are no other recurrences present.  It translates such an
incremental update to a group-by over $e_1$, followed by a reduction
of the $e_2$ values in each group using the operation
$\oplus$. Operation $\oplus$ is required to be commutative because a
group-by in a DISC system uses data shuffling across the computing
nodes to bring the data that belong to the same group together, which
may not preserve the original order of the data.  Therefore, a non-commutative
reduction may give results that are different from those of the
original loop. We have proved the soundness of our
framework by showing that our translation rules are meaning preserving
for all loop-based programs that satisfy our restrictions\short{
(proved in the extended paper~\cite{diablo-extended})}.  Given that
our translation scheme generates DISC operations, this proof implies
that loop-based programs that satisfy our restrictions are
parallelizable. 
Furthermore, the class of loop-based programs that can be handled
by our framework is equal to the class of programs expressed in our
target language, which consists of comprehensions (i.e., basic SQL),
while-loops, and assignments to variables. Some real-world programs
that contain irregular loops, such as bubble-sort which requires
swapping vector elements, are rejected.

Compared to related work (MOLD~\cite{mold:oopsla14}
and \textsc{Casper}~\cite{casper:sigmod18}): {\bf 1)} Our translation
scheme is complete under the given restrictions as it can translate
correctly any program that does not violate such restrictions, while
the related work is very limited and can work on simple loops
only. For example, neither of the related systems can translate
PageRank or Matrix Factorization. {\bf 2)} Our translator is faster
than related systems by orders of magnitude in some cases, since it
uses compositional transformations without having to search for
templates to apply (as in~\cite{mold:oopsla14}) or use a program
synthesizer to explore the space of valid programs (as
in~\cite{casper:sigmod18}). {\bf 3)} Our translations have been
formally verified\short{ (the soundness proof is given in the extended
version of this paper~\cite{diablo-extended})}, while \textsc{Casper}
needs to call an expensive program validator after each program
synthesis.  Our system, called {\em DIABLO} (a Data-Intensive
Array-Based Loop Optimizer), is implemented on top of
DIQL~\cite{diql:BigData,diql}, which is a query optimization framework
for DISC systems that optimizes SQL-like queries and translates them
to Java byte code at compile-time. Currently, DIABLO has been tested
on Spark~\cite{spark}, Flink~\cite{flink}, and Scala's Parallel
Collections.

\extended{
Although our translations are over sparse arrays, our framework can
easily handle packed arrays, such as tiled matrices, without any
fundamental extension. Essentially, the unpack and pack functions that
convert dense array structures to sparse arrays and vice versa, are
expressed as comprehensions that can be fused with those generated by
our framework, thus producing programs that directly access the packed
structures without converting them to sparse arrays first. This fusion
is hard to achieve in template-based translation systems, such as
MOLD~\cite{mold:oopsla14}, which may require different templates for
different storage structures.
}
The contributions of this paper are summarized as follows:
\begin{itemize}
\item We present a novel framework for translating array-based loops
  to distributed data parallel programs that is more general and
  efficient than related work.
\item We provide simple rules for dependence analysis that detect
  recurrences across loops that cannot be handled by our framework.
\extended{
\item We describe how our framework can be extended to handle packed
  arrays, such as tiled matrices, which can potentially result to a
  better performance.
}
\item We evaluate
  the performance of our system relative to hand-written programs
  on a variety of data analysis and machine learning programs.
\end{itemize}

\extended{
This paper is organized as follows. Section~\ref{framework}
describes our framework in detail. Section~\ref{optimizations}
lists some optimizations on comprehensions that are necessary for
good performance. Section~\ref{pack-unpack} explains how our framework
can be used on densely packed arrays, such as tiled matrices.
Finally, Section~\ref{performance} gives some performance results
for some well-known data analysis programs.
}

\section{Related Work}

Most work on automated parallelization in HPC is focused on
parallelizing loops that contain array scans without recurrences
(DOALL loops) and total reductions
(aggregations)~\cite{fisher:pldi94,jiang:pact18}. As a generalization
of these methods, DOACROSS parallelization~\cite{doall-kavi} separates
the loop computations that have no recurrences from the rest of the
loop and executes them in parallel, while the rest of the loop is
executed sequentially. Other methods that parallelize loops with
recurrences simply handle these loops as DOALL computations but they
perform a run-time dependency analysis to keep track of the dynamic
dependencies, and sequentialize some computations if
necessary~\cite{venkat:sc16}.\extended{ Recently, the work by Farzan
and Nicolet~\cite{farzan:pldi17,farzan:pldi19} describes loop-to-loop
transformations that augment the loop body with extra computations to
facilitate parallelization.
Data parallelism is an effective
technique for high-level parallel programming in which the same
computation is applied to all the elements of a dataset in parallel.}
Most data parallel languages limit their support to flat data
parallelism, which is not well suited to irregular parallel
computations. In flat data-parallel languages, the function applied over the
elements of a dataset in parallel must be itself sequential, while in
nested data-parallel languages this function too can be parallel.
Blelloch and Sabot~\cite{nesl} developed a framework to
support nested data parallelism using flattening, which is a technique
for converting irregular nested computations into regular computations
on flat arrays. These techniques have been extended and implemented in
various systems, such as Proteus~\cite{palmer:95}. DISC-based systems
do not support nested parallelism because it is hard to implement in a
distributed setting.  Spark, for example, does not allow nested RDDs
and will raise a run-time error if the function of an RDD operation
accesses an RDD.  The DIQL and DIABLO translators, on the other hand,
allow nested data parallel computations in any form, by translating
them to flat-parallel DISC operations by flattening comprehensions and
by translating nested comprehensions to DISC
joins~\cite{diql:BigData}.

The closest work to ours is MOLD~\cite{mold:oopsla14}. To the best of
our knowledge, this was the first work to identify the importance of
group-by in parallelizing loops with recurrences in a DISC platform.
Like our work, MOLD can handle complex indirect array accesses simply
using a group-by operation. But, unlike our work, MOLD uses a rewrite
system to identify certain code patterns in a loop and translate them
to DISC operations. This means that such a system is as good as its
rewrite rules and the heuristic search it uses to apply the
rules. Given that the correctness of its translations depends on the
correctness of each rewrite rule, each such rule must be written and
formally validated by an expert. Another similar system is
\textsc{Casper}~\cite{casper:sigmod18}, which translates sequential
Java code into semantically equivalent Map-Reduce programs. It uses a
program synthesizer to search over the space of sequential program
summaries, expressed as IRs. Unlike MOLD, \textsc{Casper} uses a
theorem prover based on Hoare logic to prove that the derived
Map-Reduce programs are equivalent to the original sequential
programs. Our system differs from both MOLD and \textsc{Casper} as it
translates loops directly to parallel programs using simple meaning
preserving transformations, without having to search for rules to
apply. The actual rule-based optimization of our translations is done
at a second stage using a small set of rewrite rules, thus separating
meaning-preserving translation from optimization.

Another related work on automated parallelization for DISC systems is
Map-Reduce program synthesis from input-output
examples~\cite{smith:pldi16}, which is based on recent advances in
example-directed program synthesis.\extended{ One important theorem for
parallelizing sequential scans is the third homomorphism theorem,
which indicates that any homomorphism (ie, a parallelizable
computation) can be derived from two sequential scans; a foldl that
scans the sequence from left to right and a foldr that scans it from
right to left. This theorem has been used to parallelize sequential
programs expressed as folds~\cite{morita:pldi07} by heuristically
synthesizing a foldr from a foldl first. Along these lines is
GRAPE~\cite{fan:sigmod17}, which requires three sequential incremental
programs to derive one parallel graph analysis program, although these
programs can be quite similar.
Lara~\cite{lara} is a declarative domain-specific language for collections
and matrices that allows linear algebra operations on matrices to be mixed with
for-comprehensions for collection processing. This deep embedding of matrix
and collection operations with the host programming language facilitates
better optimization. Although Lara addresses matrix inter-operation optimization,
unlike DIABLO, it does not support imperative loops with random matrix indexing.
}
\short{Another system is GRAPE~\cite{fan:sigmod17}, which requires three sequential incremental
programs to derive one parallel graph analysis program, although these
programs can be quite similar.} Another area related to automated
parallelization for DISC systems is deriving SQL queries from
imperative code~\cite{emani:sigmod16}.  Unlike our work, this work
addresses aggregates, inserts, and appends to lists but does not
address array updates.
Finally, our bulk processing of loop updates resembles the framework
described in~\cite{guravannavar:vldb08}, which rewrites a stored
procedure to accept a batch of bindings, instead of a single binding.
That way, multiple calls to a query under different parameters become
a single call to a modified query that processes all parameters in
bulk.  Unlike our work, which translates imperative loop-based
programs on arrays, this framework modifies existing SQL queries and
updates.

\extended{
Many scientific data generated by scientific experiments and
simulations come in the form of arrays, such as the results from
high-energy physics, cosmology, and climate modeling. Many of these
arrays are stored in scientific file formats that are based on array
structures, such as, CDF (Common Data Format), FITS (Flexible Image
Transport System), GRIB (GRid In Binary), NetCDF (Network Common Data
Format), and various extensions to HDF (Hierarchical Data Format),
such as HDF5 and HDF-EOS (Earth Observing System).
Many array-processing systems use special storage techniques, such as
regular tiling, to achieve better performance on certain array
computations. TileDB~\cite{tiledb} is an array data storage
management system that performs complex analytics on scientific data.
It organizes array elements into ordered collections called fragments,
where each fragment is dense or sparse, and groups contiguous array
elements into data tiles of fixed capacity. Unlike our work, the
focus of TileDB is the I/O optimization of array operations by using
small block updates to update the array stores.
SciDB~\cite{scidb:sigmod10,scidb:ssdbm15} is a large-scale data
management system for scientific analysis based on an array data model
with implicit ordering. The SciDB storage manager decomposes arrays
into a number of equal sized and potentially overlapping chunks, in a
way that allows parallel and pipeline processing of array data. Like
SciDB, ArrayStore~\cite{arraystore:sigmod11} stores arrays into
chunks, which are typically the size of a storage block. One of their
most effective storage method is a two-level chunking strategy with
regular chunks and regular tiles. SystemML~\cite{systemML} is an
array-based declarative language to express large-scale machine
learning algorithms, implemented on top of Hadoop. It supports many
array operations, such as matrix multiplication, and provides
alternative implementations to each of
them. SciHadoop~\cite{scihadoop:sc11} is a Hadoop plugin that allows
scientists to specify logical queries over arrays stored in the NetCDF
file format. Their chunking strategy, which is called the Baseline
partitioning strategy, subdivides the logical input into a set of
partitions (sub-arrays), one for each physical block of the input
file. SciHive~\cite{scihive} is a scalable array-based query system
that enables scientists to process raw array datasets in parallel with
a SQL-like query language. SciHive maps array datasets in NetCDF
files to Hive tables and executes queries via Map-Reduce. Based on the
mapping of array variables to Hive tables, SQL-like queries on arrays
are translated to HiveQL queries on tables and then optimized by the
Hive query optimizer. SciMATE~\cite{scimate} extends the Map-Reduce
API to support the processing of the NetCDF and HDF5 scientific
formats, in addition to flat-files. SciMATE supports various
optimizations specific to scientific applications by selecting a small
number of attributes used by an application and perform data partition
based on these attributes. TensorFlow~\cite{tensorflow} is a dataflow
language for machine learning that supports data parallelism on
multi-core machines and GPUs but has limited support for distributed
computing.  Finally, MLlib~\cite{MLlib:mlr16} is a
machine learning library built on top of Spark and includes algorithms
for fast matrix manipulation based on native (C++ based) linear
algebra libraries. Furthermore, MLlib provides a uniform rigid set of
high-level APIs that consists of several statistical, optimization,
and linear algebra primitives that can be used as building blocks for
data analysis applications.
}

\section{Our Framework}\label{framework}

\begin{figure*}
\framebox{
\hspace*{-3ex}\begin{minipage}[l]{4in}
\[\begin{array}{rcll}
\multicolumn{4}{l}{\mbox{\textbf{Type}:}}\\
t & ::= & v & \mbox{basic type (eg, int, float)}\\
&|&v[t] & \mbox{parametric type (eg, vector)}\\
&|&(t_1,\ldots,t_n) & \mbox{tuple type}\\
&|&\lt\,A_1:t_1,\ldots,A_n:t_n\,\gt & \mbox{record type}\\[1ex]
\multicolumn{4}{l}{\mbox{\textbf{Expression}:}}\\
e & ::= & d & \mbox{a destination (an L-value)}\\
&|&e_1\star e_2 & \mbox{any binary operation $\star$}\\
&|&(e_1,\ldots,e_n) & \mbox{tuple construction}\\
&|&\lt\,A_1\req e_1,\ldots,A_n\req e_n\,\gt & \mbox{record construction}\\
&|&const & \mbox{constant (int, float, \ldots) }
\end{array}\]
\end{minipage}\ 
\begin{minipage}[l]{3in}
\[\begin{array}{rcll}
\multicolumn{4}{l}{\mbox{\textbf{Destination} (L-value):}}\\
d & ::= & v & \mbox{variable}\\
&|&d.A & \mbox{record projection}\\
&|&v[e_1,\ldots,e_n] & \mbox{array indexing}\\[1ex]
\multicolumn{4}{l}{\mbox{\textbf{Statement}:}}\\
s & ::= & d\opluseq e & \mbox{incremental update}\\
&|& d:=e & \mbox{assignment}\\
&|&\mathbf{var}\;v:t=e & \mbox{declaration}\\
&|&\mathbf{for}\;v=e_1,e_2\;\mathbf{do}\;s & \mbox{iteration}\\
&|&\mathbf{for}\;v\;\mathbf{in}\;e\;\mathbf{do}\;s & \mbox{traversal}\\
&|&\mathbf{while}\;(e)\;s & \mbox{loop}\\
&|&\mathbf{if}\;(e)\;s_1\,[\;\mathbf{else}\;s_2\,] & \mbox{conditional}\\
&|&\{\,s_1;\,\ldots;\,s_n\} & \mbox{statement block}
\end{array}\]
\end{minipage}
}
\caption{Syntax of loop-based programs}\label{syntax}
\end{figure*}

\subsection{Syntax of the Loop-Based Language}

The syntax of the loop-based language is given in
Figure~\ref{syntax}. This is a proof-of-concept loop-based language;
many other languages, such as Java or C, can be used instead.  Types
of values include parametric types for various kinds of collections,
such as vectors, matrices, key-value maps, bags, lists, etc. To
simplify our translation rules and examples in this section, we do not
allow nested arrays, such as vectors of vectors.  There are two kinds
of assignments, an incremental update $d\pluseq{\oplus}e$ for some
commutative operation $\oplus$, which is equivalent to the update
$d:=d\oplus e$, and all other assignments $d:=e$. To simplify
translation, variable declarations, $\mathbf{var}\;v:t=e$, cannot
appear inside for-loops. There are two kinds of for-loops that can be
parallelized: a for-loop in which an index variable iterates over a
range of integers, and a for-loop in which a variable iterates over
the elements of a collection, such as the values of an array.
Our current framework generates sequential code from a
while-loop. Furthermore, if a for-loop contains a while-loop in its
body, then this for-loop too becomes sequential and it is
treated as a while-loop.  Finally, a statement block contains a
sequence of statements.

\subsection{Restrictions for Parallelization}\label{recurrences}

Our framework can translate for-loops to equivalent DISC programs
when these loops satisfy certain restrictions
described in this section. 
In Appendix~\ref{proofs}, we provide a proof that,
under these restrictions, our transformation rules to be
presented in Section~\ref{translation} are meaning preserving,
that is, the programs generated by our translator are equivalent to
the original loop-based programs. In other words, since our target
language is translated to DISC operations, the loop-based programs that satisfy our
restrictions are parallelizable.

Our restrictions use the following definitions.
For any statement $s$ in a loop-based program, we define the following
three sets of L-values (destinations): the readers $\mathcal{R}\sem{s}$,
the writers $\mathcal{W}\sem{s}$, and the aggregators
$\mathcal{A}\sem{s}$. The {\bf readers} are the L-values read in $s$, the
{\bf writers} are the L-values written (but not incremented) in $s$, and
the {\bf aggregators} are the L-values incremented in $s$.
For example, for the following statement:
\begin{align*}
V[W[i]] \pluseq{+} n*C[i]*C[i+1],
\end{align*}
where $i$ is a loop index,
the aggregators are $\mathcal{A}\sem{s}=\bag{V[W[i]]}$, the readers are
$\mathcal{R}\sem{s}=\bag{W[i],n,C[i],C[i+1]}$, and the writers are $\mathcal{W}\sem{s}=\emptyset$.
Two L-values $d_1$ and $d_2$ {\bf overlap}, denoted by $\mathrm{overlap}(d_1,$ $d_2)$,
if they are the same variable, or they are equal to the projections
$d_1'.A$ and $d_2'.A$ with $\mathrm{overlap}(d_1',d_2')$,
or they are array accesses over the same array name.
The {\bf context} of a statement $s$, $\mathrm{context}(s)$, is the set of outer loop
indexes for all loops that enclose $s$.
Note that, each for-loop must have a distinct loop index variable;
if not, the duplicate loop index is replaced with a fresh variable.
For an L-value $d$, $\mathrm{indexes}(d)$
is the set of loop indexes used in $d$.

An {\bf affine} expression~\cite{aho:book} takes
the form
\[c_0+c_1*i_1+\cdots+c_k*i_k,\]
where $i_1,\ldots,i_k$ are loop indexes and $c_0,\ldots,c_k$ are constants.  For an
L-value $d$ in a statement $s$, $\mathrm{affine}(d,s)$ is true if $d$ is a variable, or
a projection $d'.A$ with $\mathrm{affine}(d',s)$, or an
array indexing $v[e_1,\ldots,e_n]$, where each index $e_i$ is an
affine expression and all loop indexes in $\mathrm{context}(s)$ are
used in $d$. In other words, if $\mathrm{affine}(d,s)$ is true, then
$d$ is stored at different locations for different values of the loop
indexes in $\mathrm{context}(s)$.

\begin{definition}[Affine For-Loop]\label{PFOR-def}
A for-loop statement $s$ is affine 
if $s$ satisfies the following properties:
\begin{enumerate}
\item for any update $d:=e$ in $s$, $\mathrm{affine}(d,s)$;
\item there are no dependencies between any two statements $s_1$ and $s_2$ in $s$,
that is, if there are no L-values\linebreak $d_1\in(\mathcal{A}\sem{s_1}\cup\mathcal{W}\sem{s_1})$ and
$d_2\in\mathcal{R}\sem{s_2}$ such that\linebreak
$\mathrm{overlap}(d_1,d_2)$,
with the following exceptions:
\begin{enumerate}
\item if $d_1\in\mathcal{W}\sem{s_1}$, $d_1=d_2$, and $s_1$ precedes $s_2$;
\item if $d_1\in\mathcal{A}\sem{s_1}$, $d_1=d_2$, $s_1$ precedes $s_2$,
$\mathrm{affine}(d_2,s_2)$,
and $\mathrm{context}(s_1)\cap\mathrm{context}(s_2)=\mathrm{indexes}(d_1)$.
\end{enumerate}
\end{enumerate}
\end{definition}
Restriction~1 indicates that the destination of any non-incre-mental
update must be a different location at each loop iteration. If the update
destination is an array access, the array indexes must be affine and
completely cover all surrounding loop indexes. This restriction does
not hold for incremental updates, which allow arbitrary array indexes
in a destination as long as the array is not read in the same loop.
Restriction~2 combined with exception~(a)
rejects any read and write on the
same array in a loop except when the read is after the write and
the read and write are at the same location ($d_1=d_2$),
which, based on Restriction~1, is a different location at each loop iteration.
Exception~(b) indicates that
if we first increment and then read the same location, 
then these two operations must not be inside a for-loop whose
loop index is not used in the destination.  This is because the
increment of the destination is done within the for-loops whose loop
indexes are used in the destination and across the rest of the
surrounding for-loops. For example, the following loop:
\[\textbf{for}\;i=\ldots\;\textbf{do}\;\{\;\textbf{for}\;j=\ldots\;\textbf{do}\;V[i]\pluseq{+}1;\;W[i]:=V[i]\;\},\]
increments and reads $V[i]$.
The contexts of the first and second
updates are $\bag{i,j}$ and $\bag{i}$, respectively, and
their intersection gives $\bag{i}$, which is equal to the indexes of
$V[i]$.  If there were another statement $M[i,j]:=V[i]$ inside the
inner loop, this would violate Exception~(b) since their context
intersection would have been $\bag{i,j}$, which is not equal to the
indexes of $V[i]$.

An affine for-loop satisfies the following theorem,
\extended{which is proved in Appendix~\ref{proofs}.}
\short{which is proved in the extended version of this paper~\cite{diablo-extended}.}
It is used as the basis of our program translations.
\begin{theorem}\label{for-distribution}
An affine for-loop satisfies:
\begin{align}
&\textbf{for}\;i=\ldots\;\textbf{do}\;\{\,s_1;\,s_2\,\}\nonumber\\
&\hspace*{3ex}\;=\;\{\,\textbf{for}\;i=\ldots\;\textbf{do}\;s_1;\;\textbf{for}\;i=\ldots\;\textbf{do}\;s_2\;\}.\label{for-distr}
\end{align}
\end{theorem}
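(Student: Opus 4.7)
The plan is to prove this equality by unrolling both sides into explicit sequences of statement instances --- one per loop iteration per sub-statement --- and showing that the two traces produce the same final state via a bubble-sort style swap argument. Writing $s_k^{(\ell)}$ for the execution of $s_k$ at iteration $\ell$ of the outer for-loop, the left-hand side yields the interleaved trace $s_1^{(0)}, s_2^{(0)}, s_1^{(1)}, s_2^{(1)}, \ldots, s_1^{(N)}, s_2^{(N)}$, while the right-hand side yields $s_1^{(0)}, s_1^{(1)}, \ldots, s_1^{(N)}, s_2^{(0)}, s_2^{(1)}, \ldots, s_2^{(N)}$. I would transform the left-hand trace into the right-hand one by repeatedly swapping adjacent pairs of the form $(s_1^{(j)}, s_2^{(i)})$ with $i < j$; the theorem then reduces to showing that each such swap is semantics-preserving.

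To establish that $s_1^{(j)}$ and $s_2^{(i)}$ commute for $i < j$, I would perform a case analysis on how their memory accesses can interact. The symmetric cases in which $s_2$ writes to or aggregates into an L-value read by $s_1$ are immediately ruled out by Restriction~2 of Definition~\ref{PFOR-def}: since $s_1$ textually precedes $s_2$ in the block, neither Exception~(a) nor Exception~(b) applies in the direction $s_2 \to s_1$, so such dependencies are forbidden outright. The remaining cases are read-after-write and read-after-aggregate conflicts in which $s_1$ is the writer or aggregator and $s_2$ is the reader. In the write-read case governed by Exception~(a), we have $d_1 = d_2$ syntactically, and Restriction~1 forces $\mathrm{affine}(d_1, s_1)$, so the outer loop index has a nonzero coefficient in at least one index expression of $d_1$; hence $s_1^{(j)}$ and $s_2^{(i)}$ touch distinct memory locations whenever $i \neq j$. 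In the aggregate-read case governed by Exception~(b), the conjunction $d_1 = d_2$, $\mathrm{affine}(d_2, s_2)$, and $\mathrm{context}(s_1) \cap \mathrm{context}(s_2) = \mathrm{indexes}(d_1)$ again forces the outer loop index to appear in the affine indexes of $d_1$, yielding the same disjointness conclusion. Write-write interactions are disjoint across iterations by the same affine argument, and aggregate-aggregate interactions commute because $\oplus$ is required to be commutative.

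The main obstacle will be the aggregate-read case, since Exception~(b) is the subtlest restriction and constrains not only the current loop's index but the entire chain of enclosing loop indexes. I would handle it by observing that within the for-loop being distributed both $s_1$ and $s_2$ share the same context $C \cup \{i\}$, where $C$ is the surrounding context and $i$ is the outer loop index; the intersection condition then forces $i \in \mathrm{indexes}(d_1)$, and combined with $\mathrm{affine}(d_2, s_2)$ this guarantees that $d_1$ syntactically depends on $i$ via an affine expression with nonzero coefficient. Once this disjointness property is in hand, the swap invariant holds uniformly, and iterating it finitely many times transforms the left-hand trace into the right-hand trace, completing the proof of Theorem~\ref{for-distribution}.
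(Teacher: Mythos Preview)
Your proposal is correct and follows essentially the same case analysis as the paper: both use Restriction~2 to exclude reverse dependencies and then invoke the affine property (via Restriction~1 for Exception~(a) and the context-intersection condition for Exception~(b)) to conclude that the outer loop index~$i$ appears in the destination, so distinct iterations touch distinct cells. The paper's proof is a brief one-paragraph sketch that argues directly from ``no interference across iterations'' to ``the loop can be split,'' whereas your trace-unrolling and adjacent-swap framework makes the operational argument explicit and also spells out the write--write and aggregate--aggregate cases the paper leaves implicit; the underlying reasoning is the same.
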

In fact, our restrictions in Definition~\ref{PFOR-def} were
designed in such a way that all affine for-loops satisfy
this theorem and at the same time are inclusive enough to accept
as many common loop-based programs as possible.
In Appendix~\ref{proofs}, we prove that
our program translations, to be described in Section~\ref{translation},
under the restrictions in Definition~\ref{PFOR-def} are meaning preserving,
which implies that all affine for-loops are parallelizable since the
target of our translations is DISC operations.

For example, the incremental update:
\[\textbf{for}\;i=\ldots\;\textbf{do}\;C[V[i].K]\pluseq{+}V[i].D,\]
which counts all $V[i].D$ in groups that have the same key $V[i].K$,
satisfies our restrictions since it increments but does
not read $C$.  On the other hand, some
non-incremental updates may outright be rejected. For example, the
loop:
\[\textbf{for}\;i=\ldots\;\textbf{do}\;V[i]:=(V[i-1]+V[i+1])/2\]
will be rejected by Restriction~2 because $V$ is both a reader and a writer.
To alleviate this problem, one may rewrite this loop as follows:
\[\begin{array}{l}
\textbf{for}\;i=\ldots\;\textbf{do}\;V'[i]:=V[i];\\
\textbf{for}\;i=\ldots\;\textbf{do}\;V[i]:=(V'[i-1]+V'[i+1])/2,
\end{array}\]
which first stores $V$ to $V'$ and then reads $V'$ to compute $V$. This
program satisfies our restrictions but is not equivalent to
the original program because it uses the previous values of $V$ to
compute the new ones.
Another example is:
\[\textbf{for}\;i=\ldots\;\textbf{do}\;\{\,n:=V[i];\,W[i]:=f(n)\,\},\]
which is also rejected because $n$ is not affine as it does not cover
the loop indexes (namely, $i$). To fix this problem, one may redefine
$n$ as a vector and rewrite the loop as:
\[\textbf{for}\;i=\ldots\;\textbf{do}\;\{\,n[i]:=V[i];\,W[i]:=f(n[i])\,\}.\]
Redefining variables by adding to them more array dimensions is
currently done manually by a programmer, but we believe that
it can be automated when a variable that violates our restrictions is detected.

A more complex example is matrix factorization using gradient
descent~\cite{koren:comp09}. The goal of matrix factorization is to
split a matrix $R$ of dimension $n\times m$ into two low-rank matrices
$P$ and $Q$ of dimensions $n\times l$ and $l\times m$, for small $l$,
such that the error between the predicted and the original matrix
$R-P\times Q$ is below some threshold. One step of matrix
factorization that computes the new values $P$ and $Q$ from the
previous values $P'$ and $Q'$ can be implemented using the following
loop-based program:
\begin{lstlisting}[language=java]
for i = 0, n-1 do
   for j = 0, m-1 do {
      pq := 0.0;
      for k = 0, l-1 do
         pq += P#'#[i,k]*Q#'#[k,j];
      error := R[i,j]-pq;
      for k = 0, l-1 do {
         P[i,k] += a*(2*error*Q#'#[k,j]-b*P#'#[i,k]);
         Q[k,j] += a*(2*error*P#'#[i,k]-b*Q#'#[k,j]); }}
\end{lstlisting}
where \s{a} is the learning rate and \s{b} is the normalization factor
used in avoiding overfitting. This program first computes \s{pq},
which is the $i,j$ element of $P'\times Q'$, and \s{error}, which is
the $i,j$ element of $R-P'\times Q'$. Then, it uses \s{error} to
improve $P$ and $Q$. This program is rejected because the destinations
of the assignments \s{pq := 0.0} and \s{error := R[i,j]-pq}
do not cover all loop indexes, and the read of \s{pq} violates exception~(b)
(since the intersection of the contexts of \s{pq += P'[i,k]*Q'[k,j]}
and \s{error := R[i,j]-pq} is \s{\{i,j\}}, which is not equal to the indexes of \s{pq}).
To rectify these problems, we can convert the variables
\s{pq} and \s{error} to matrices,
so that, instead of \s{pq} and \s{error}, we use \s{pq[i,j]}
and \s{error[i,j]}.

\subsection{Monoid Comprehensions}\label{comprehensions}

The target of our translations consists of monoid comprehensions,
which are equivalent to the SQL select-from-where-group-by-having syntax.
Monoid comprehensions were first introduced and used in the
90's as a formal basis for ODMG OQL~\cite{tods00}.  They were
recently used as the formal calculus for the DISC query languages
MRQL~\cite{jfp17} and DIQL~\cite{diql:BigData}.  The formal semantics
of monoid comprehensions, the query optimization framework, and the
translation of comprehensions to a DISC algebra, are given in our earlier
work~\cite{jfp17,diql:BigData}. Here, we describe the syntax only.

A monoid comprehension has the following syntax:
\[\compr{e}{q_1,\;\ldots,\;q_n},\]
where the expression $e$ is the comprehension head and a qualifier
$q_i$ is defined as follows:
\[\begin{array}{rcll}
\multicolumn{4}{l}{\mbox{\textbf{Qualifier}:}}\\
\hspace*{5ex}q&::=& p\from e & \mbox{generator}\\
&|&\textbf{let}\;p=e & \mbox{let-binding}\\
&|&e & \mbox{condition}\\
&|&\textbf{group\ by}\,p\, [\,: e\,]&\mbox{group-by}\\
\multicolumn{4}{l}{\mbox{\textbf{Pattern}:}}\\
\hspace*{5ex}p& ::=& v &\mbox{pattern variable}\\
&|&(p_1,\ldots,p_n)&\mbox{tuple pattern.}
\end{array}\]
The domain $e$ of a generator $p\from e$ must be a bag.
This generator draws elements from this
bag and, each time, it binds the pattern $p$ to an element. A
condition qualifier $e$ is an expression of type boolean. It is used
for filtering out elements drawn by the generators.
A let-binding $\textbf{let}\;p=e$ binds the pattern $p$ to the result of
$e$. A group-by qualifier uses a pattern $p$ and an optional
expression $e$. If $e$ is missing, it is taken to be $p$. The group-by
operation groups all the pattern variables in the same comprehension
that are defined before the group-by (except the variables in $p$) by
the value of $e$ (the group-by key), so that all variable bindings
that result to the same key value are grouped together. After the
group-by, $p$ is bound to a group-by key and each one of these pattern
variables is lifted to a bag of values.
The result of a comprehension $\compr{e}{q_1,\;\ldots,\;q_n}$ is a bag that contains
all values of $e$ derived from the variable bindings in the qualifiers.

\short{
Comprehensions can be translated to algebraic operations that resemble
the bulk operations supported by many DISC systems, such as groupBy,
join, map, and flatMap. In an earlier work~\cite{jfp17}, we have presented
a general method for identifying
all possible equi-joins in comprehensions, including joins across
deeply nested comprehensions, and translating them to joins and
coGroups.
}

\extended{
Comprehensions can be translated to algebraic operations that resemble
the bulk operations supported by many DISC systems, such as groupBy,
join, map, and flatMap. We use $\bar{q}$ to
represent the sequence of qualifiers $q_1,\;\ldots,\;q_n$, for $n\geq 0$.  
To translate a comprehension $\compr{e}{\bar{q}}$ to the algebra,
the group-by qualifiers are first translated to groupBy operations
from left to right.  Given a bag $X$ of type $\bag{(K,V)}$,
$\mathrm{groupBy}(X)$ groups the elements of $X$ by their first
component of type $K$ (the group-by key) and returns a bag of type
$\bag{(K,\bag{V})}$.  Let $v_1,\ldots,v_n$ be the pattern variables in
the sequence of qualifiers $\bar{q_1}$ that do not appear in the
group-by pattern $p$, then we have:
\begin{align*}
&\compr{e'}{\bar{q_1},\,\textbf{group\ by}\,p: e,\,\bar{q_2}}\\
&=\;\compr{e'}{(p,s)\from\mathrm{groupBy}(\compr{(e,(v_1,\dots,v_n))}{\bar{q_1}}),\\
&\skiptext{=\;\comprr{e'}}\forall i:\;\textbf{let}\;v_i=\compr{v_i}{(v_1,\ldots,v_n)\from s},\;\bar{q_2}}.
\end{align*}
That is, for each pattern variable $v_i$, this rule embeds a
let-binding so that this variable is lifted to a bag that contains all
$v_i$ values in the current group.  Then, comprehensions without any
group-by are translated to the algebra by translating the qualifiers
from left to right:
\begin{align*}
\compr{e'}{p\from e,\,\bar{q}} & =\; \mathrm{flatMap}(\lm p.\,\compr{e'}{\bar{q}},\,e)\\
\compr{e'}{\textbf{let}\;p=e,\,\bar{q}} & =\; \textbf{let}\;p=e\;\textbf{in}\;\compr{e'}{\bar{q}}\\
\compr{e'}{e,\,\bar{q}} & =\; \textbf{if}\;e\;\textbf{then}\;\compr{e'}{\bar{q}}\;\textbf{else}\;\emptyset\\
\compr{e'}{} & =\; \bag{e'}.
\end{align*}
Given a function $f$ that maps an element of type $T$ to a bag of type
$\bag{S}$ and a bag $X$ of type $\bag{T}$, the operation
$\mathrm{flatMap}(f,X)$ maps the bag $X$ to a bag of type $\bag{S}$ by
applying the function $f$ to each element of $X$ and unioning together
the results.  Although this translation generates nested flatMaps from
join-like comprehensions, there is a general method for identifying
all possible equi-joins from nested flatMaps, including joins across
deeply nested comprehensions, and translating them to joins and
coGroups~\cite{jfp17}.

Finally, nested comprehensions can be unnested by the following rule:
\begin{align}
&\compr{e_1}{\bar{q_1},\,p\from\compr{e_2}{\bar{q_3}},\,\bar{q_2}}\nonumber\\
&\hspace*{10ex}=\; \compr{e_1}{\bar{q_1},\,\bar{q_3},\,\textbf{let}\;p=e_2,\,\bar{q_2}}\label{nested-comps}
\end{align}
for any sequence of qualifiers $\bar{q_1}$, $\bar{q_2}$, and
$\bar{q_3}$. This rule can only apply if there is no group-by
qualifier in $\bar{q_3}$ or when $\bar{q_1}$ is empty. It may require
renaming the variables in $\compr{e_2}{\bar{q_3}}$ to prevent variable
capture.
}

\subsection{Array Representation}\label{array-rep}

In our framework, a sparse array, such as a sparse vector or a matrix,
is represented as a key-value map (also known as an indexed set), which is a bag of
type $\bag{(K,T)}$, where $K$ is the array index type and $T$
is the array value type.  More specifically, a sparse vector of type
$\mathrm{vector}[T]$ is captured as a key-value map of type
$\bag{(\mathrm{long},T)}$, while a sparse matrix of type
$\mathrm{matrix}[T]$ is captured as a key-value map of type
$\bag{((\mathrm{long},\mathrm{long}),T)}$.

Merging two compatible arrays is done with the array merging
operation $\lhd$, defined as follows:
\begin{align*}
X\lhd Y\; =\; & \compr{(k,b)}{(k,a)\from X,\,(k',b)\in Y,\,k=k'}\\
& \uplus\;\compr{(k,a)}{(k,a)\from X,\,k\not\in\Pi_1(Y)}\\
& \uplus\;\compr{(k,b)}{(k,b)\from Y,\,k\not\in\Pi_1(X)},
\end{align*}
where $\Pi_1(X)$ returns the keys of $X$.
That is, $X\lhd Y$ is the union of $X$ and $Y$, except when there is
$(k,x)\in X$ and $(k,y)\in Y$, in which case it chooses the latter value, $(k,y)$.
For example, $\bag{(3,10),(1,20)}\lhd\bag{(1,30),(4,40)}$ is equal to
$\bag{(3,10),(1,30),(4,40)}$. 
On Spark, the $\lhd$ operation can be implemented as a coGroup.

An update to a vector $V[e_1]:=e_2$ is equivalent to the assignment
$V:=V\lhd\bag{(e_1,e_2)}$.  That is, the new value of $V$ is the current vector $V$ but with the
value associated with the index $e_1$ (if any) replaced with $e_2$.
Similarly, an update to a matrix $M[e_1,e_2]:=e_3$ is equivalent to the assignment
$M:=M\lhd\bag{((e_1,e_2),e_3)}$.

Array indexing though is a little bit more complex because the indexed element
may not exist in the sparse array. Instead of a value of type $T$,
indexing over an array of $T$ should return a bag of type
$\bag{T}$, which can be $\bag{v}$ for some value $v$ of type
$T$, if the value exists, or $\emptyset$, if the value does not exist.
Then, the vector indexing $V[e]$ is $\ocompr{v}{(i,v)\from V,\,i=e}$,
which returns a bag of type $\bag{T}$. Similarly, the matrix
indexing $M[e_1,e_2]$ is $\ocompr{v}{((i,j),v)\from
M,\,i=e_1,\,j=e_2}$.

We are now ready to express any assignment that involves vectors and
matrices.  For example, consider the matrices $R$, $M$, and $N$ of
type matrix[float]. The assignment:
\begin{align}
&R[i,j] := M[i,k]*N[k,j]\label{mat-simple}
\end{align}
is translated to the assignment:
\begin{align}
&R := R\lhd\compr{((i,j),m*n)}{((i,k),m)\from M,\label{assign:example}\\
&\skiptext{R := R\lhd\comprr{((i,j),m*n)}}((k',j),n)\from N,\,k=k'},\nonumber
\end{align}
which uses a bag comprehension equivalent to a join between the
matrices $M$ and $N$.  This assignment can be derived from
assignment~(\ref{mat-simple}) using simple transformations.  To
understand these transformations, consider the product $X*Y$.  Since
both $X$ and $Y$ have been lifted to bags, because they may contain array
accesses, this product must also be lifted to a comprehension
that extracts the values of $X$ and $Y$, if any, and returns their
product:
\[X*Y = \ocompr{x*y}{x\from X,\,y\from Y}.\]
Given that matrix accesses are expressed as:
\[\begin{array}{l}
M[i,k] = \ocompr{m}{((I,J),m)\from M,\,I=i,\,J=k}\\
N[k,j] = \ocompr{n}{((I,J),n)\from N,\,I=k,\,J=j},
\end{array}\]
the product $M[i,k]*N[k,j]$ is equal to:
\[\begin{array}{l}
\ocompr{x*y}{x\from\ocompr{m}{((I,J),m)\from M,\,I=i,\,J=k},\\
\skiptext{\comprr{x*y}}\!y\from\ocompr{n}{((I,J),n)\from N,\,I=k,\,J=j}},
\end{array}\]
which is normalized as follows\short{ by unnesting nested comprehensions:}\extended{ using Rule~(\ref{nested-comps}), after
some variable renaming}:
\[\begin{array}{l}
\ocompr{x*y}{((I,J),m)\from M,\,I=i,\,J=k,\,\textbf{let}\;x=m,\\
\skiptext{\comprr{x*y}}\!((I',J'),n)\from N,\,I'=k,\,J'=j,\,\textbf{let}\;y=n}\\
=\ocompr{m*n}{((I,J),m)\from M,\,I=i,\,J=k,\\
\skiptext{=\comprr{m*n}}\!((I',J'),n)\from N,\,I'=k,\,J'=j}.
\end{array}\]
Lastly, since the value of $e$ in the assignment $R[i,j] := e$ is lifted to a bag,
this assignment is translated to $R:=R\lhd\compr{((i,j),v)}{v\from e}$,
that is, $R$ is augmented with an indexed set that results from accessing the lifted
value of $e$. If $e$ contains a value, the comprehension
will return a singleton bag, which will replace $R[i,j]$ with that value.
After substituting the value $e$ with the term derived for $M[i,k]*N[k,j]$,
we get an assignment equivalent to the assignment~(\ref{assign:example}).

\subsection{Handling Array Updates in a Loop}\label{array-updates}

We now address the problem of translating array updates in a loop.
We classify updates into two categories:
\begin{enumerate}
\item Incremental updates of the form $d:=d\oplus e$, for some commutative
operation $\oplus$, where $d$ is an update destination, which is also
repeated as the left operand of $\oplus$.  It can also be written as
$d\opluseq e$. For example, $V[i]\pluseq{+}1$ increments $V[i]$ by 1.
\item All other updates of the form $d:=e$.
\end{enumerate}
Consider the following loop with a non-incremental update:
\begin{align}
&\textbf{for}\;i=1,N\;\textbf{do}\;V[g(i)]:=W[f(i)]\label{for-loop}
\end{align}
for some vectors $V$ and $W$, and some terms $f(i)$ and $g(i)$ that depend on
the index $i$.  Our framework translates this loop to an update to the
vector $V$, where all the elements of $V$ are updated at once, in a
parallel fashion:
\begin{align}
&V:=V\lhd\compr{(g(i),v)}{i\from\mathrm{range}(1,N),\label{update-compr1}\\
&\skiptext{V:=V\lhd\comprr{(g(i),v)}}(k,v)\from V,\,k=f(i)}.\nonumber
\end{align}
But this expression may not produce the same vector $V$ as the original
loop if there are recurrences in the loop, such as, when
the loop body is $V[i]:=V[i-1]$.
Furthermore, the join between range$(1,N)$ and $W$ in~(\ref{update-compr1})
looks unnecessary. We will transform such joins to array traversals in Section~\ref{remove-loops}.

In our framework, for-loops are embedded as generators inside the
comprehensions that are associated with the loop
assignments. Consider, for example, matrix copying:
\begin{align*}
&\mathbf{for}\;i=1,10\;\mathbf{do}\;\mathbf{for}\;j=1,20\;\mathbf{do}\;M[i,j]:=N[i,j].
\end{align*}
Using the translation of the assignment $M[i,j]:=N[i,j]$, the loop becomes:
\begin{align}
&\mathbf{for}\;i=1,10\;\mathbf{do}\label{loop-trans}\\
&\hspace*{4ex}\mathbf{for}\;j=1,20\;\mathbf{do}\nonumber\\
&\hspace*{8ex}M:=M\lhd\compr{((i,j),n)}{((I,J),n)\from N,\nonumber\\
&\hspace*{8ex}\skiptext{M:=M\lhd\comprr{((i,j),n)}}I=i,\,J=j}.\nonumber
\end{align}
To parallelize this loop, we embed the for-loops inside the
comprehension as generators:
\begin{align}
&M:=M\lhd\compr{((i,j),n)}{i\from\mathrm{range}(1,10),\label{embedded}\\
&\skiptext{M:=M\lhd\comprr{((i,j),n)}}j\from\mathrm{range}(1,20),\nonumber\\
&\skiptext{MM\comprr{((i,j),n)}}((I,J),n)\from N,\,I=i,\,J=j}.\nonumber
\end{align}
Notice the difference between the loop~(\ref{loop-trans}) and the assignment~(\ref{embedded}).
The former will do 10*20 updates to $M$ while the latter will only
do one bulk update that will replace all $M[i,j]$ with $N[i,j]$ at once.
This transformation can only apply when there are no recurrences across iterations.
\ignore{Also, if the body of a for-loop is a block of statements, we want
to embed the generator of this for-loop inside each comprehension derived
from every assignment in the block of statements.
This means that we want $\mathbf{for}\;i=\ldots\;\mathbf{do}\;\{\,s_1;\,s_2\,\}$
to be equal to
$\{\,\mathbf{for}\;i=\ldots\;\mathbf{do}\;s_1;\;\mathbf{for}\;i=\ldots\;\mathbf{do}\;s_2\,\}$.
}

\subsection{Eliminating Loop Iterations}\label{remove-loops}

Before we present the details of program translation, we address the problem of
eliminating index iterations, such as
$\mathrm{range(1,N)}$ in assignment~(\ref{update-compr1}),
and $\mathrm{range(1,10)}$ and range(1, 20) in assignment~(\ref{embedded}).
If there is a right inverse $F$ of $f$ such that $f(F(k))=k$,
then the assignment~(\ref{update-compr1}) is optimized to:
\begin{align}
&V:=V\lhd\compr{(g(F(k)),v)}{(k,v)\from W,\label{update-compr2}\\
&\skiptext{V:=V\lhd\comprr{(g(F(k)),v)}}\mathrm{inRange}(F(k),1,N)},\nonumber
\end{align}
where the predicate $\mathrm{inRange}(F(k),1,N)$ returns true if
$F(k)$ is within the range $[1,N]$. Given that the right-hand side of an
update may involve multiple array accesses, we can choose one whose index
term can be inverted. For example, for $V[i-1]$, the inverse of $k=i-1$ is $i=k+1$.
In the case where no such inverse can be derived,
the range iteration simply remains as is. One such example is
the loop $\mathbf{for}\;i=1,N\;\mathbf{do}\;V[i]:=0$,
which is translated to $V:=V\lhd\compr{(i,0)}{i\from\mathrm{range}(1,N)}$.

\subsection{Handling Incremental Updates}\label{handling-recurrences}

There is an important class of recurrences in loops that can be
parallelized using group-by and aggregation. Consider, for example, the following
loop with an incremental update:
\begin{align}
&\textbf{for}\;i=1,N\;\textbf{do}\;V[g(i)]\pluseq{+}W[i].\label{incr-loop}
\end{align}
\short{It}\extended{Let's say, for example, that there are 3 indexes overall, $i_1$, $i_2$, and $i_3$,
that have the same image under $g$, ie, $k=g(i_1)=g(i_2)=g(i_3)$.
Then, $V[k]$ must be set to $V[k]+W[i_1]+W[i_2]+W[i_3]$. In general,
we need to bring together all values of $W$ whose indexes have the same image under $g$.
That is, we need to group by $g(i)$.
Hence, the loop} can be translated to a comprehension with a group-by:
\begin{align*}
&V:=V\lhd\compr{(k,v+(+/w))}{(i,w)\from W,\,\mathrm{inRange}(i,1,N),\\
&\hspace*{15ex}\textbf{group\ by}\;k: g(i),\,(j,v)\from V,\,j=k},
\end{align*}
which groups $W$ by the destination index $g(i)$ and, for each group, it calculates the aggregation
$+/w$ of all values $w=W[i]$ with the same $g(i)$ value, but also adds the original value $v=V[g(i)]$ before the group-by.

If the destination of the incremental update is a variable, such as in $n\pluseq{+}W[i]$,
then the group-by is over $(\,)$, since there are no indexes used in $n$:
\begin{align*}
&n:=\ocompr{n+(+/w)}{(i,w)\from W,\,\mathrm{inRange}(i,1,N),\\
&\skiptext{n:=\comprr{n+(+/w)}}\textbf{group\ by}\;k: (\,)}.
\end{align*}
This group-by can be eliminated because
it forms a single group; in which case the variable $w$ is lifted to a bag that contains all the values of $W$:
\begin{align*}
&n:=\some{n+(+/\compr{w}{(i,w)\from W,\,\mathrm{inRange}(i,1,N)})}.
\end{align*}
We will discuss optimizations like this in Section~\ref{optimizations}.

\begin{figure*}
\framebox{
\begin{minipage}[l]{6.78in}
\begin{subequations}
\begin{align}
\intertext{\underline{$\mathcal{E}\sem{e}:$\qquad Translate the expression $e$ to a comprehension term}}
\mathcal{E}\sem{V} & =\;\some{V}\label{t1}\\
\mathcal{E}\sem{e.A} & =\;\ocompr{v.A}{v\from\mathcal{E}\sem{e}}\label{t3}\\
\mathcal{E}\sem{V[e_1,\ldots,e_n]} & =\;\ocompr{v}{k_1\from\mathcal{E}\sem{e_1},\ldots,k_n\from\mathcal{E}\sem{e_n},\,((i_1,\ldots,i_n),v)\from V,\,i_1=k_1,\ldots,i_n=k_n}\label{t4}\\
\mathcal{E}\sem{e_1\star e_2} & =\;\ocompr{v_1\star v_2}{v_1\from\mathcal{E}\sem{e_1},\,v_2\from\mathcal{E}\sem{e_2}}\label{t5}\\
\mathcal{E}\sem{(e_1,\ldots,e_n)} & =\;\ocompr{(v_1,\ldots,v_n)}{v_1\from\mathcal{E}\sem{e_1},\,\ldots,\,v_n\from\mathcal{E}\sem{e_n}}\label{t6}\\
\mathcal{E}\sem{\lt\,A_1\req e_1,\ldots,A_n\req e_n\,\gt}
& =\;\ocompr{\lt\,A_1\req v_1,\ldots,A_n\req v_n\,\gt}{v_1\from\mathcal{E}\sem{e_1},\,\ldots,\,v_n\from\mathcal{E}\sem{e_n}}\label{t7}\\
\mathcal{E}\sem{const} & =\;\some{const}\label{t8}
\end{align}
\end{subequations}\\[-0.5ex]
\begin{minipage}[l]{2.65in}
\begin{subequations}
\begin{align}
\nonumber\\[-3.2em]
\intertext{\underline{$\mathcal{K}\sem{d}:$\qquad Derive the destination index from $d$}}
\mathcal{K}\sem{V} & =\;\some{(\,)}\label{k1}\\
\mathcal{K}\sem{d.A_i} & =\;\mathcal{K}\sem{d}\label{k3}\\
\mathcal{K}\sem{V[e_1,\ldots,e_n]} & =\;\mathcal{E}\sem{(e_1,\ldots,e_n)}\label{k4}
\end{align}
\end{subequations}
\end{minipage}\hspace*{5ex}
\begin{minipage}[l]{3.8in}
\begin{subequations}
\begin{align}
\nonumber\\[-3.2em]
\intertext{\underline{$\mathcal{D}\sem{d}(k):$\qquad Derive $d$ from the destination index $k$}}
\mathcal{D}\sem{V}(k) & =\;\some{V}\label{d1}\\
\mathcal{D}\sem{d.A_i}(k) & =\;\ocompr{v.A_i}{v\from\mathcal{D}\sem{d}(k)\label{d3}}\\
\mathcal{D}\sem{V[e_1,\ldots,e_n]}(k) & =\;\ocompr{v}{((i_1,\ldots,i_n),v)\from V,\hspace*{10ex}\nonumber\\
&\skiptext{=\;\,\comprr{v}}(i_1,\ldots,i_n)=k}\label{d4}
\end{align}
\end{subequations}
\end{minipage}\\[1.6ex]
\begin{subequations}
\begin{align}
\nonumber\\[-3.2em]
\intertext{\underline{$\mathcal{U}\sem{d}(x):$\qquad Update the destination $d$ with the value $x$}}
\mathcal{U}\sem{V}(x) & =\;\liste{V:=\ocompr{v}{(k,v)\from x}}\label{u9}\\
\mathcal{U}\sem{d.A_i}(x) & =\;\mathcal{U}\sem{d}(\compr{(k,\lt\,A_1\req w.A_1,\ldots,A_i\req v,\ldots,A_n\req w.A_n\,\gt)}{(k,v)\from x,\,w\from\mathcal{D}\sem{d}(k)})\label{u11}\\
\mathcal{U}\sem{V[e_1,\ldots,e_n]}(x) & =\;\liste{V:=V\lhd x}\label{u12}
\end{align}
\end{subequations}
\begin{subequations}
\begin{align}
\nonumber\\[-2.5em]
\intertext{\underline{$\mathcal{S}\sem{s}(\bar{q}):$\qquad Translate the statement $s$ to a target code block using the list of for-loop qualifiers $\bar{q}$}}
\mathcal{S}\sem{d\,\opluseq e}(\bar{q})
& =\;\mathcal{U}\sem{d}(\compr{(k,w\oplus(\oplus/v))}{\bar{q},\,v\from\mathcal{E}\sem{e},\,k\from\mathcal{K}\sem{d},\nonumber\\
&\skiptext{ =\;\mathcal{U}\sem{d}(\comprr{(k,w\oplus(\oplus/v))}}\;\mathbf{group\ by}\,k,\,w\from\mathcal{D}\sem{d}(k)})\label{s1}\\
\mathcal{S}\sem{d:=e}(\bar{q}) & =\;\mathcal{U}\sem{d}(\compr{(k,v)}{\bar{q},\,v\from\mathcal{E}\sem{e},\,k\from\mathcal{K}\sem{d}})\label{s2}\\
\mathcal{S}\sem{\mathbf{var}\;V:t=e}(\bar{q})
& =\;\mathcal{S}\sem{V:=e}(\bar{q})\label{x18}\\
\mathcal{S}\sem{\mathbf{for}\;v=e_1,e_2\;\mathbf{do}\;s}(\bar{q})
& =\;\mathcal{S}\sem{s}(\bar{q}\app\liste{v_1\from\mathcal{E}\sem{e_1},\,v_2\from\mathcal{E}\sem{e_2},\,v\from\mathrm{range}(v_1,v_2)})\label{x19}\\
\mathcal{S}\sem{\mathbf{for}\;v\;\mathbf{in}\;e\;\mathbf{do}\;s}(\bar{q})
& =\;\mathcal{S}\sem{s}(\bar{q}\app\liste{A\from\mathcal{E}\sem{e},\,(i,v)\from A})\label{x20}\\
\mathcal{S}\sem{\mathbf{while}\;(e)\;s}(\bar{q})
& =\;\liste{\mathrm{while}(\mathcal{E}\sem{e},\mathcal{S}\sem{s}(\bar{q}))}\label{x21}\\
\mathcal{S}\sem{\mathbf{if}\;(e)\;s_1\;\mathbf{else}\;s_2}(\bar{q})
& =\;\mathcal{S}\sem{s_1}(\bar{q}\app\liste{p\from\mathcal{E}\sem{e},\,p})
\app\mathcal{S}\sem{s_2}(\bar{q}\app\liste{p\from\mathcal{E}\sem{e},\,!p})\label{x22}\\
\mathcal{S}\sem{\{\,s_1;\,\ldots;\,s_n\}}(\bar{q})
& =\;\mathcal{S}\sem{s_1}(\bar{q})\app\cdots\app\mathcal{S}\sem{s_n}(\bar{q})\label{x23}
\end{align}
\end{subequations}
\end{minipage}
}
\caption{Rules for translating loop-based programs to target code}\label{translation-rules}
\end{figure*}

\subsection{Program Translation}\label{translation}

The target of our translations is a list of statements, where a statement $c$ has the following syntax:
\[\begin{array}{rcll}
\multicolumn{4}{l}{\mbox{\textbf{Target Code}:}}\\
c & ::= & v:=e & \mbox{assignment}\\
&|& \mathrm{while}(e,c) & \mbox{loop}\\
&|& \liste{c_1,\ldots,c_n} & \mbox{code block}
\end{array}\]
In the target code, an assignment to a variable $v$ of type $t$ gets a
value $e$ of type $\bag{t}$. An assignment to an array is done in
bulk, by replacing the entire array with a new one. The while-loop
corresponds to the while statement in Figure~\ref{syntax}; it repeats
the code $c$ in its body while the condition $e$ is true. Finally, a
code block is like a block of statements that need to be evaluated in
order.

The rules for translating loop-based programs to the target code are
given in Figure~\ref{translation-rules}. They are mainly given in terms of the
semantic functions $\mathcal{E}$ and $\mathcal{S}$ that translate
expressions and statements, respectively. The syntactic brackets
$\sem{\ldots}$ enclose syntactic elements, as defined in
Figure~\ref{syntax}. The rules for $\mathcal{E}\sem{e}$, given in
Equations~(\ref{t1})-(\ref{t8}), translate an expression $e$ of type $t$ to
a comprehension term of type $\bag{t}$. For example, using
Equation~(\ref{t4}), $M[1,2]$ is translated to:
\begin{align*}
&\ocompr{v}{k\from\some{1},\,l\from\some{2},\,((i,j),v)\from M,\,i=k,\,j=l}\\
&=\ocompr{v}{((i,j),v)\from M,\,i=1,\,j=2}.
\end{align*}
The rules for $\mathcal{S}\sem{s}(\bar{q})$, given in
Equations~(\ref{s1})-(\ref{x23}), translate a statement $s$ to a list
of target code statements. $\mathcal{S}\sem{s}(\bar{q})$ is
parameterized by a list of qualifiers $\bar{q}$ that correspond to the
for-loop iterations, to be embedded in the comprehensions derived from
the assignments in the loop body. This is always possible because of
Theorem~\ref{for-distribution}. That is, the for-loops in
Equations~(\ref{x19}) and~(\ref{x20}) become qualifiers, which are
propagated to the translation of their body $s$ along with the current
$\bar{q}$ (where $\app$ is list concatenation). While-loops, on the
other hand, are translated to while-loop target statements in
Equation~(\ref{x21}) because they are not parallelized. The
qualifiers $\bar{q}$ are propagated to every statement in a block, as
shown in Equation~(\ref{x23}). Equations~(\ref{s1}) and~(\ref{s2})
translate assignments. An incremental update $d\,\opluseq e$, equal
to $d:=d\oplus e$, is translated by Equation~(\ref{s1}). All other
assignments are translated by Equation~(\ref{s2}). Both
Equations~(\ref{s1}) and~(\ref{s2}) use the semantic function
$\mathcal{K}$ that derives the destination indexes of the
assignment, and the semantic function $\mathcal{U}$ that generates the
update associated with the assignment. More specifically,
$\mathcal{U}\sem{d}(x)$ replaces the destination $d$ with the value
$x$ by reconstructing the destination variable from its components,
replacing the components reachable from $d$ with $x$. For example,
$\mathcal{U}\sem{V[1]}(\bag{(1,10)})$, which is equal to
$\liste{V:=V\lhd\bag{(1,10)}}$, updates $V$ to be equal to $V$ but
with $V[1]$ replaced with $10$. The incremental update $d\,\opluseq e$
is translated by Equation~(\ref{s1}) to a comprehension with a
group-by over the destination index $d$ and an aggregation $\oplus/v$
of all $e$ values associated with the same group-by key. The value $w$
added to the aggregation is the initial value of $d$ before the
loop. This value cannot be computed from $\mathcal{E}\sem{d}$ because
it is correlated to the destination index $k$. Instead, it is derived
from $k$ using the semantic function $\mathcal{D}$.

\extended{Theorem~\ref{correctness-proof} in Appendix~\ref{proofs} proves that
the transformation rules in Figure~\ref{translation-rules} under the restrictions in Definition~\ref{PFOR-def} are meaning preserving.}
\ignore{In the extended version of this paper~\cite{diablo-extended}, we provide a proof that
the transformation rules in Figure~\ref{translation-rules}
under the restrictions in Definition~\ref{PFOR-def} are meaning preserving.}

\subsection{Examples of Program Translation}

\extended{First consider the following statement $s$ that consists of a non-incremental update in a for-loop:}
\short{Consider the following statement $s$:}
\[\begin{array}{l}
\mathbf{for}\;i=1,10\;\mathbf{do}\;V[i]:=W[i].
\end{array}\]
It is translated as follows:
\[\begin{array}{l}
\mathcal{S}\sem{s}(\liste{\,})\\
\why{from Equation~(\ref{x19})}
=\mathcal{S}\sem{V[i]:=W[i]}(\liste{v_1\from\mathcal{E}\sem{1},\,v_2\from\mathcal{E}\sem{10},\\
\skiptext{=\mathcal{S}\sem{V[i]:=W[i]}([\,}i\from\mathrm{range}(v_1,v_2)})\\
\why{using Equation~(\ref{t8}) and after normalization}
=\mathcal{S}\sem{V[i]:=W[i]}(\liste{i\from\mathrm{range}(1,10)})\\
\why{from Equation~(\ref{s2})}
=\mathcal{U}\sem{V[i]}(\compr{(k,v)}{i\from\mathrm{range}(1,10),\,v\from\mathcal{E}\sem{W[i]},\\
\skiptext{=\mathcal{U}\sem{V[i]}(\comprr{(k,v)}}k\from\mathcal{K}\sem{V[i]}})\\
\why{from Equations~(\ref{t4}) and~(\ref{k4})}
=\mathcal{U}\sem{V[i]}(\compr{(k,v)}{i\from\mathrm{range}(1,10),\\
\skiptext{\mathcal{U}\sem{V[i]}(\comprr{()}}v\from\ocompr{w}{(j,w)\from W,\,j=i},\,k\from\some{i}})\\
\why{after normalization}
=\mathcal{U}\sem{V[i]}(\compr{(i,w)}{i\from\mathrm{range}(1,10),\,(j,w)\from W,\,j=i})\\
\why{from Equation~(\ref{u12})}
= \liste{V:=V\lhd\compr{(i,w)}{i\from\mathrm{range}(1,10),\,(j,w)\from W,\,j=i}}\\
\why{after eliminating the loop iteration}
= \liste{V:=V\lhd\compr{(i,w)}{(i,w)\from W,\,\mathrm{inRange}(i,1,10)}}.
\end{array}\]
Note that, the assignment $V:=V\lhd\ldots$ is done in parallel,
such as replacing an RDD with another RDD in Spark.
Consider the following loop $s$ with an incremental update:
\[\begin{array}{l}
\mathbf{for}\;i=1,10\;\mathbf{do}\;W[K[i]]\pluseq{+}V[i].
\end{array}\]
Then, from Equation~(\ref{x19}), $\mathcal{S}\sem{s}(\liste{\,})$ is equal to:
\[\begin{array}{l}
\mathcal{S}\sem{W[K[i]]\pluseq{+}V[i]}(\liste{v_1\from\mathcal{E}\sem{1},\,v_2\from\mathcal{E}\sem{10},\\
\skiptext{\mathcal{S}\sem{W[K[i]]\pluseq{+}V[i]}([\,}i\from\mathrm{range}(v_1,v_2)})\\
=\mathcal{S}\sem{W[K[i]]\pluseq{+}V[i]}(\liste{i\from\mathrm{range}(1,10)}).
\end{array}\]
To translate $W[K[i]]\pluseq{+}V[i]$ using Equation~(\ref{s1}), we need to derive
the destination index using Equation~(\ref{k4}):
\[\begin{array}{l}
\mathcal{K}\sem{W[K[i]]} = \mathcal{E}\sem{K[i]}
=\ocompr{a}{(m,a)\from K,\,m=i}
\end{array}\]
and the destination value from the destination index using Equation~(\ref{d4}):
\[\begin{array}{l}
\mathcal{D}\sem{W[K[i]]}(k)
=\ocompr{v}{(i,v)\from W,\,i=k}.
\end{array}\]
Hence, the loop translation is:
\[\begin{array}{l}
\mathcal{S}\sem{W[K[i]]\pluseq{+}V[i]}(\liste{i\from\mathrm{range}(1,10)})\\
\why{from Equation~(\ref{s1})}
=\mathcal{U}\sem{W[K[i]]}(\compr{(k,w+(+/v))}{i\from\mathrm{range}(1,10),\\
\hspace*{5ex}(l,v)\from V,\,l=i,\,k\from\mathcal{K}\sem{W[K[i]]},\\
\hspace*{5ex}\mathbf{group\ by}\;k,\,w\from\mathcal{D}\sem{W[K[i]]}(k)})
\end{array}\]
\[\begin{array}{l}
=\mathcal{U}\sem{W[K[i]]}(\compr{(k,w+(+/v))}{i\from\mathrm{range}(1,10),\\
\hspace*{5ex}(l,v)\from V,\,l=i,\,k\from\ocompr{a}{(m,a)\from K,\,m=i},\\
\hspace*{5ex}\mathbf{group\ by}\;k,\,w\from\ocompr{v}{(i,v)\from W,\,i=k}})\\
=\mathcal{U}\sem{W[K[i]]}(\compr{(k,w+(+/v))}{i\from\mathrm{range}(1,10),\\
\hspace*{5ex}(l,v)\from V,\,l=i,\,(m,a)\from K,\,m=i,\\
\hspace*{5ex}\mathbf{group\ by}\;a,\,(j,w)\from W,\,j=a})\\
\why{from Equation~(\ref{u12})}
=\liste{W:=W\lhd\compr{w+(+/v)}{i\from\mathrm{range}(1,10),\\
\hspace*{10ex}(l,v)\from V,\,l=i,\,(m,a)\from K,\,m=i,\\
\hspace*{10ex}\mathbf{group\ by}\;a,\,(j,w)\from W,\,j=a}},
\end{array}\]
which is optimized to the following target code after removing the loop iteration:
\[\begin{array}{l}
\liste{W:=W\lhd\compr{w+(+/v)}{(i,v)\from V,\\
\hspace*{10ex}\mathrm{inRange}(i,1,10),\,(m,a)\from K,\,m=i,\\
\hspace*{10ex}\mathbf{group\ by}\;a,\,(j,w)\from W,\,j=a}}.
\end{array}\]
\ \\

\section{Optimizations}\label{optimizations}

As discussed in Section~\ref{framework}, incremental updates
on variables of a basic type, such as  $n\pluseq{+}W[i]$, can be translated to total aggregations.
This translation is actually an optimization of the default translation.
The optimization rule, for a constant group-by key $c$, is:
\begin{align}
&\compr{e}{\bar{q_1},\,\mathbf{group\ by}\;p:c,\,\bar{q_2}}\label{constant-key}\\
&\rightarrow\;\compr{e}{\mathbf{let}\;p=c,\,\forall v_i:\mathbf{let}\;v_i=\compr{v_i}{\bar{q_1}},\,\bar{q_2}}\nonumber
\end{align}
where $v_i$ are the pattern variables in $\bar{q_1}$.
For example, consider the assignment $n\pluseq{+}W[i]$, which is translated to:
\begin{align*}
&n:=\ocompr{n+(+/w)}{(i,w)\from W,\,\textbf{group\ by}\;k: (\,)}.
\end{align*}
The right-hand side of this assignment is optimized to:
\begin{align*}
&\ocompr{n+(+/w)}{\textbf{let}\;k=(\,),\,\textbf{let}\;w=\compr{w}{(i,w)\from W}}\\
&=\;\some{n+(+/\compr{w}{(i,w)\from W})}
\end{align*}
which is more efficient because it does not use a group-by.
The same happens when indexes in the destination are constants, such as in $M[1,2]\pluseq{+}1$.
Then, the group-by on $(1,2)$ can be removed using Rule~(\ref{constant-key}):
\begin{align*}
&M\lhd\compr{(k,v+(+/c)}{\textbf{let}\;c=1,\,\textbf{group\ by}\;k: (1,2),\\
&\skiptext{M\lhd\comprr{(k,v+(+/c)}}((i,j),v)\from M,\,i=1,\,j=2}\\
&=\;M\lhd\compr{(k,v+(+/c)}{\textbf{let}\;k=(1,2),\\
&\skiptext{=\;\,M\lhd\comprr{(k,v+(+/c)}}\textbf{let}\;c=\compr{c}{\textbf{let}\;c=1},\\
&\skiptext{=\;\,M\lhd\comprr{(k,v+(+/c)}}((i,j),v)\from M,\,i=1,\,j=2}\\
&=\;M\lhd\compr{((1,2),v+1)}{((i,j),v)\from M,\,i=1,\,j=2}
\end{align*}

Another optimization is when the group-by key is unique, that is, when the group-by function
is injective. In that case, each group is a singleton bag. The group-by can be eliminated using the following rule:
\begin{align}
&\compr{e}{\bar{q_1},\,\mathbf{group\ by}\;p:k,\,\bar{q_2}}\label{unique-key}\\
&\rightarrow\;\compr{e}{\bar{q_1},\;\mathbf{let}\;p=k,\,\forall v_i:\mathbf{let}\;v_i=\bag{v_i},\,\bar{q_2}}\nonumber
\end{align}
That is, the group-by is removed and every pattern variable $v_i$ in $\bar{q_1}$
is lifted to a singleton bag that represents the group, that is, it contains $v_i$ only.
For example, the loop:
\begin{align*}
&\textbf{for}\;i=1,10\;\textbf{do}\;V[i]\pluseq{+} W[i]
\end{align*}
has a default translation, after removing the for-loop:
\begin{align*}
&V\lhd\compr{(k,v+(+/w))}{(i,w)\from W,\,\textrm{inRange}(i,1,10),\\
&\skiptext{\comprr{(k,v+(+/w))}}\textbf{group\ by}\;k:i,\,(j,v)\from V,\;j=k}
\end{align*}
Here, the group-by key is unique since it is the index of $W$. Based on Rule~(\ref{unique-key}),
this term is optimized to:
\begin{align*}
&V\lhd\compr{(k,v+(+/w))}{(i,w)\from W,\,\textrm{inRange}(i,1,10),\\
&\skiptext{M\lhd\comprr{k}}\textbf{let}\;k=i,\,\textbf{let}\;w=\bag{w},\,(j,v)\from V,\;j=k}\\
&=V\lhd\compr{(i,v+w)}{(i,w)\from W,\,\textrm{inRange}(i,1,10),\\
&\skiptext{=M\lhd\comprr{(i,v+w)}}\,(j,v)\from V,\;j=i}
\end{align*}
Inferring whether a group-by key is unique is similar to inferring
whether an assignment destination is affine
(Section~\ref{recurrences}). A generator $(i,w)\from W$ for
an array $W$ indicates that $i$ is unique. If the group-by key
is an affine term that consists of all array indexes in the generators before the group-by,
then it is a unique key.

\ignore{
Finally, updates to an array $V$ are translated to the target code $\liste{V:=V\lhd x}$.
In some cases, this code can be optimized to $\liste{V:=x}$, which does not require the
join $\lhd$. This is only possible if the existing $V$ has indexes that are a subset or equal
to the indexes in $x$.
}

\extended{
\section{Packing/Unpacking Arrays}\label{pack-unpack}

In our framework, sparse arrays are an abstract representation of real
arrays that may have been stored and partitioned into various custom
dense storage structures. This separation of representation from
implementation simplifies the language semantics by abstracting the
implementation details from programs. More importantly, it makes
easier to change the implementation without changing the programs. But
this separation may introduce one more level of interpretation needed
for restructuring data when loading storage structures into sparse
arrays (unpacking) and storing sparse arrays to storage structures
(packing). Our framework though can remove this extra layer of
interpretation without any fundamental extension to the framework. In
this paper, we discuss only matrices.

In our framework, a sparse matrix of type matrix$[T]$ is represented
as $\bag{((\mathrm{long},\mathrm{long}),T)}$, which contains the
matrix elements in sparse form. One example of a concrete
implementation of a matrix is organizing the matrix elements into
equal sized chunks, called
tiles~\cite{tiledb,scidb:sigmod10,arraystore:sigmod11}, where each
tile is a dense array of elements. This is called a tiled matrix. One
possible implementation of a tiled matrix is
$\bag{((\mathrm{long},\mathrm{long}),\mathrm{Array}[T])}$, which is a
bag of tiles where each tile has an upper-left coordinate index and a
$\mathrm{Array}[T]$ which is a dense vector that contains the matrix
elements that belong to this tile. The translation from dense to sparse
vector, and vice versa, can be defined as follows in Scala:
\begin{tabbing}
def scan(V) = V.zipWithIndex.map(\_.swap)\\
def form(L,n) = \{ \=\+val a = new Array(n)\\
                  for( (i,v) $\leq$- if i$\geq$0 \&\& i$\leq$n ) a(i)=v\\
                  a \}
\end{tabbing}
where scan$(V)$ converts the dense vector $V=\liste{v_1,\ldots,v_n}$
into the sparse vector $\bag{(0,v_1),$ $\ldots,(n-1,v_n)}$ and
form$(L,n)$ converts the sparse vector $L$ to a dense vector of size
$n$.  A tile is the unit of distributed processing. If, in addition,
we use a Scala parallel collection, such as ParArray, to store the
dense vector in a tile, then we would have thread-level data
parallelism along with distributed data parallelism.

Suppose that the tiles are of size $n*m$.
We can map a tiled
matrix $N$ to a sparse matrix using the function unpack$(N)$:
\begin{align*}
&\compr{(\, (I+k/m,\, J+k\%m),\, v\, )}{((I,J),L)\from N,\\
&\skiptext{\comprr{(\, (I+k/m,\, J+k\%m),\, v\, )}}(k,v)\from\mathrm{scan}(L)}
\end{align*}
We can map a sparse matrix $M$
to a tiled matrix using the function pack$(M)$:
\begin{align*}
&\compr{(\, (I*n,\, J*m),\, \mathrm{form}(z,n*m)\, )}{((i,j),v)\from M,\\
&\hspace*{4ex}\mathbf{let}\;z=(i+j*n,v),\,\mathbf{group\  by}\; (I: i/n,J:j/m)}
\end{align*}
Under these mappings, $N[i,j]$ is translated to:
\begin{align*}
&\ocompr{v}{((I,J),v)\from\mathrm{unpack}(N),\,I=i,\,J=j}\\
&= \ocompr{v}{((I,J),L)\from N,\, (k,v)\from\mathrm{scan}(L),\,I=i,\,J=j}
\end{align*}
which directly traverses the data in the matrix tiles.  Assignments to
a matrix $N$, which are normally translated to $N:=N\lhd x$ by
Equation~(\ref{u12}), are now translated to
$N:=\mathrm{pack}(\mathrm{unpack}(N)\lhd x)$.  This expensive
unpacking and packing of $N$ can be removed by transforming this
assignment to $N:=N\lhd' \mathrm{pack}(x)$, where $\lhd'$ merges two
tiled matrices.  The tile merging in $N\lhd' \mathrm{pack}(x)$ can be
implemented without shuffling if we keep every matrix partitioned by
the tile coordinates and set the group-by partitioner in
$\mathrm{pack}(x)$ to be equal to the $N$ partitioner.  Then the tile
merging can be implemented using zipPartitions in Spark, which does
not require any shuffling.
}

\begin{figure*}
\begin{center}
\hspace*{-1ex}\scalebox{0.67}{\includegraphics{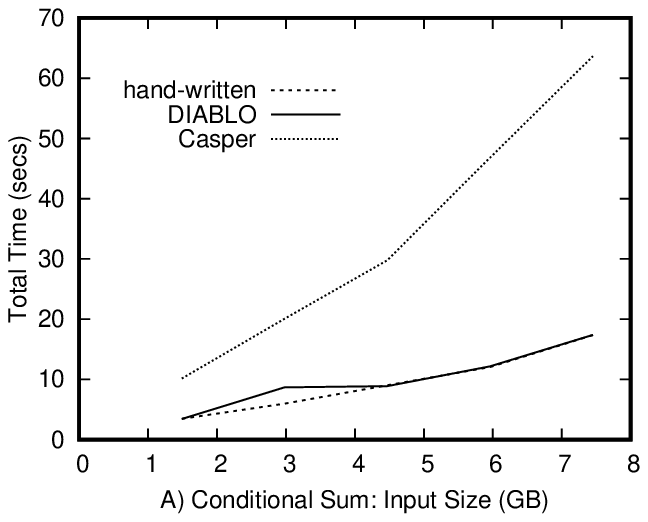}}
\hspace*{-3.8ex}\scalebox{0.67}{\includegraphics{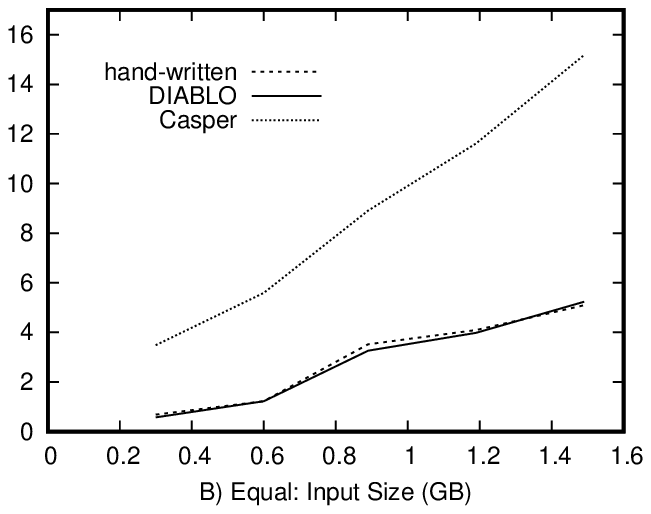}}
\hspace*{-3.8ex}\scalebox{0.67}{\includegraphics{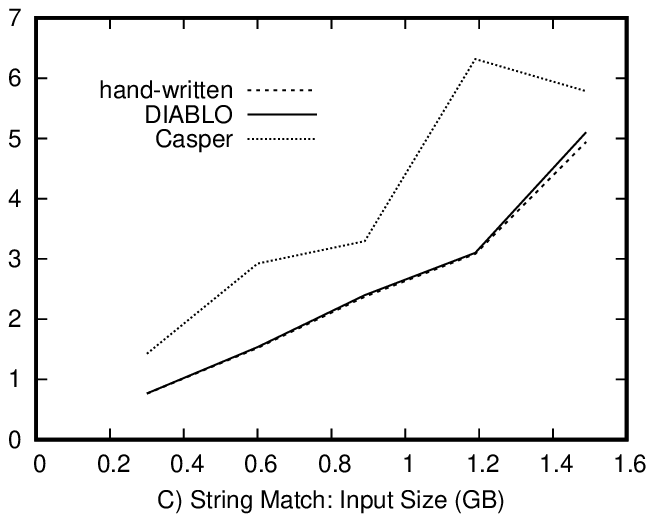}}
\hspace*{-3.8ex}\scalebox{0.67}{\includegraphics{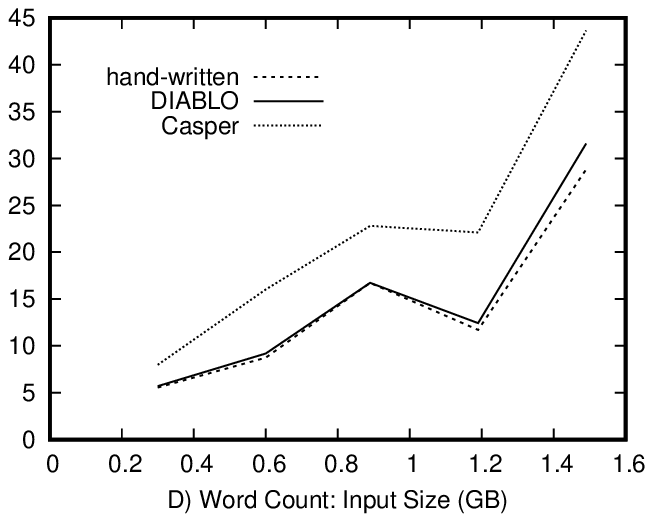}}\\
\hspace*{-1ex}\scalebox{0.67}{\includegraphics{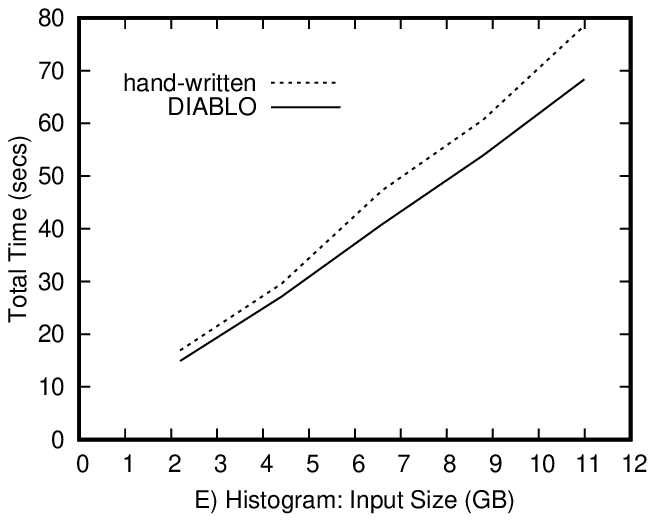}}
\hspace*{-3.8ex}\scalebox{0.67}{\includegraphics{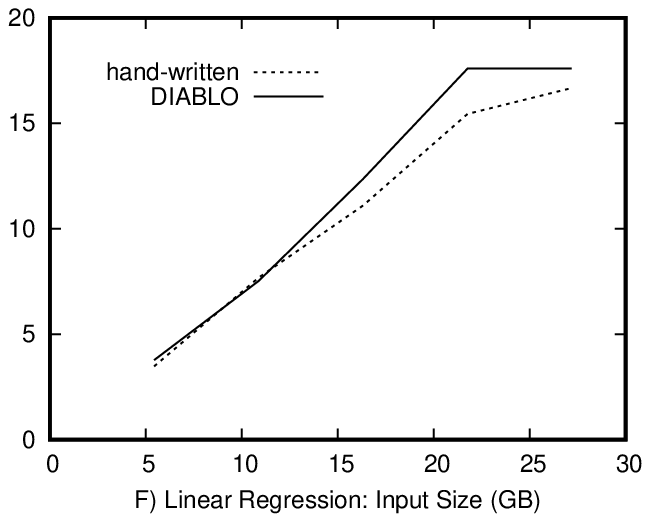}}
\hspace*{-3.8ex}\scalebox{0.67}{\includegraphics{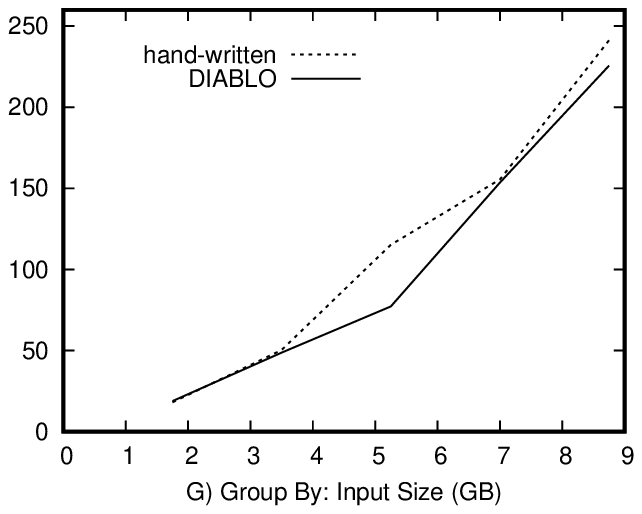}}
\hspace*{-3.8ex}\scalebox{0.67}{\includegraphics{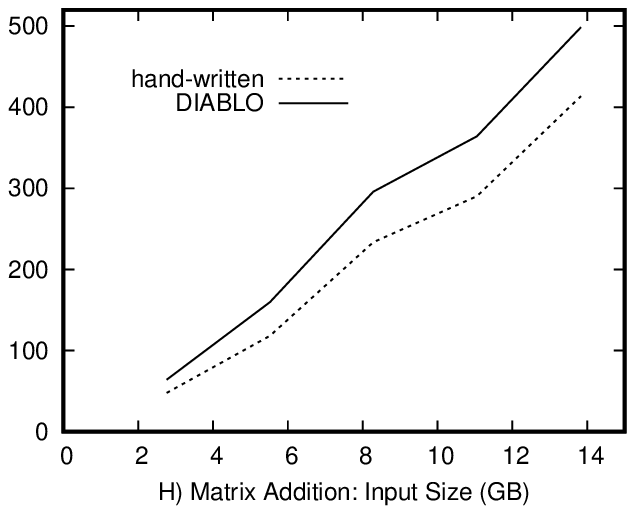}}\\
\hspace*{-1ex}\scalebox{0.67}{\includegraphics{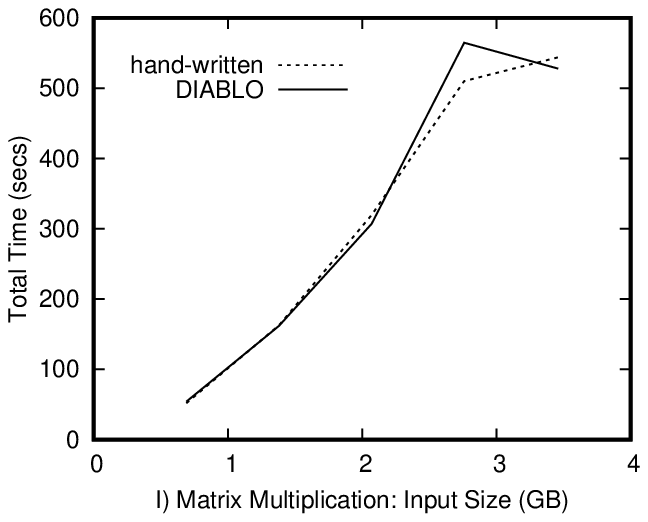}}
\hspace*{-3.8ex}\scalebox{0.67}{\includegraphics{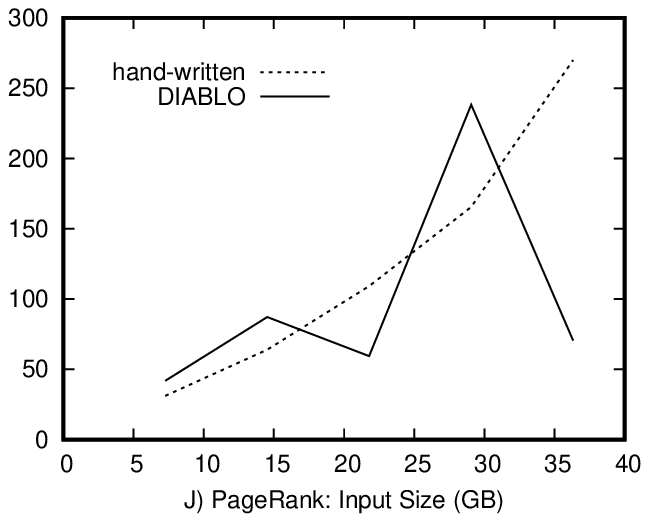}}
\hspace*{-3.8ex}\scalebox{0.67}{\includegraphics{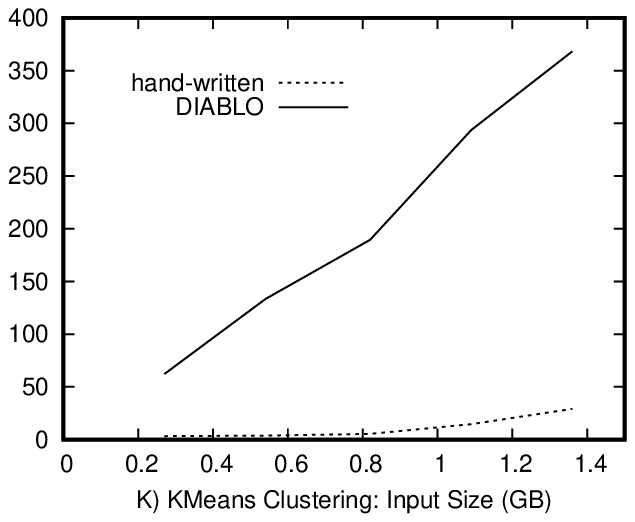}}
\hspace*{-3.8ex}\scalebox{0.67}{\includegraphics{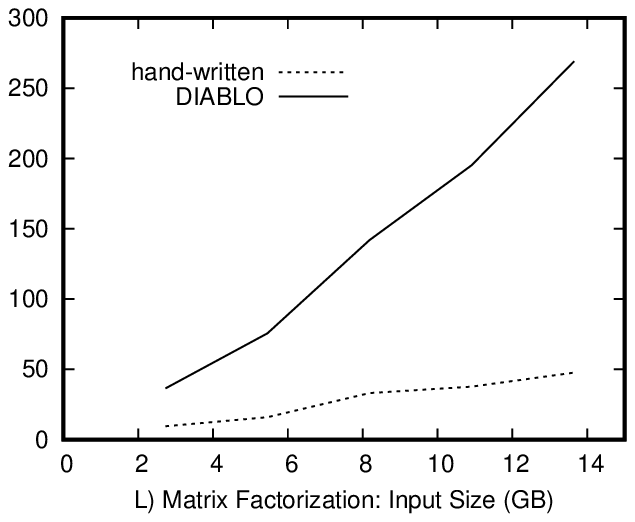}}
\end{center}
\caption{Performance of DIABLO relative to hand-written Spark code}\label{performance-results}
\end{figure*}

\section{Performance Evaluation}\label{performance}

DIABLO is implemented on top of DIQL~\cite{diql:BigData,diql}, which is a query
optimization framework that optimizes and compiles queries to Java
byte code at compile-time. DIQL can run on Apache Spark, Apache
Flink, Cascading/Scalding, and Scala Parallel collections. 
DIABLO compiles loop-based programs to monoid comprehensions, which in
turn are translated to byte code by the DIQL compiler.
DIABLO is currently implemented on Spark, Flink, and on Scala's
Parallel Collections.\extended{  The translator from loop-based programs to
monoid comprehensions is less than 500 lines of Scala code, while the
normalizer and optimizer of comprehensions are also less than 500
lines total.}
The DIABLO code is available as part of the DIQL source code on GitHub~\cite{diql}.
The subdirectory {\tt benchmarks/diablo}
in the source code contains all the benchmark programs, the scripts,
and the detailed execution logs with all the measurements derived from our
performance evaluations (the file README.md explains how to repeat these
experiments).

\extended{Our translation scheme is general, since it can translate any
loop-based program that satisfies our restrictions, and
efficient, since it uses simple program transformations, instead of
searching to match specific program templates.}  We first evaluated the
translator efficiency of DIABLO relative to MOLD~\cite{mold:oopsla14}
and \textsc{Casper}~\cite{casper:sigmod18} (Table~\ref{compilations}).
The programs used in these evaluations are described next in this section.
The translation times for MOLD were taken directly from the MOLD
paper~\cite{mold:oopsla14} but are not verified, because at the time
of writing, we could not install MOLD due to
software dependency issues.  In addition, although both the binaries and
source code of \textsc{Casper} are available at~\cite{casper:site}, we
were not able to validate some of the results reported
in~\cite{casper:sigmod18}.  More specifically, based on our
communication with the main developer of \textsc{Casper}, we tried
many configurations and libraries, but were not able to compile some of
the test files provided with the source code.  The results reported
here were run on Casper 0.1.1, with Sketch 1.7.5 and Dafny
1.9.7. These experiments were done on a 2.7 GHz Intel
Core i5 with 8GB RAM. Each program was run 4 times.
\textsc{Casper} was able to synthesize code for Histogram but its validator failed
to validate the code.  For Linear Regression, \textsc{Casper} was
taking too long so we had to abort it after 19 hours. The fail entries
in Table~\ref{compilations} are failures to
synthesize code for the test files; these errors were reported by the
Dafny program synthesizer.  We can see that the DIABLO translator is
far more efficient than both MOLD and \textsc{Casper} and, unlike
these systems, can translate complex programs. In
fact, \textsc{Casper} can only translate trivial flat loops.

\begin{table}
\begin{center}
{\scriptsize\begin{tabular}{|l||r|r|r|}\hline
test program & MOLD & \textsc{Casper} & DIABLO\\\hline
Average & & 172.25 & 5.75\\
Conditional Count & & 20.25 & 5.75\\
Conditional Sum & & 18.75 & 5.25\\
Count & & 9.75 & 5.75\\
Equal & & 11.25 & 5.75\\
Equal Frequency & & 778.00 & 5.75\\
String Match & 68 & 806.00 & 8.50\\
Sum & & 10.25 & 5.00\\
Word Count & 11 & 102.25 & 6.50\\
Histogram & 233 & 10272.00 & 9.00\\
Matrix Multiplication & 40 & fail & 8.25\\
Linear Regression & 28 & $>$19 hours & 8.75\\
KMeans & 340 & fail & 9.75\\
PCA & 66 & fail & 13.25\\
PageRank & & & 9.50\\
Matrix Factorization & & & 14.50\\\hline
\end{tabular}}
\end{center}
\caption{Compilation time in secs}\label{compilations}
\end{table}

Although the focus of our work is on distributed processing, not
shared-memory data parallelism, our second set of experiments was to
evaluate a variety of loop-based programs in two ways: in parallel
using Scala's parallel collections and sequentially using regular
lists.  That is, each one of these loop-based programs was compiled to
parallel and to sequential Scala programs, and these two programs were
evaluated over the same data.  Scala uses thread-level shared-memory
data parallelism on a multi-core computer to process parallel
collections. For these evaluations, we used one server with Xeon E5-2680v3
at 2.5GHz, with 24 cores and 128GB RAM.  The results,
shown in Table~\ref{parallel}, are based on the programs and data
described next in this section.  Each experiment was evaluated 4 times
and the mean value was used. We can see that for all, but
Group-By and KMeans, the DIABLO parallel programs are faster than
their sequential counterparts.

\begin{table}
\begin{center}
\scriptsize\begin{tabular}{|l||r|r|r|r|}\hline
test program & count & size (MB) & par & seq\\\hline
Conditional Sum & $10^9$ & 61035 & 19.6 & 40.6\\
Equal & $5\times 10^8$ & 20504 & 9.2 & 33.2\\
String Match & $5\times 10^8$ & 20504 & 8.3 & 32.6\\
Word Count & $5\times 10^7$ & 2050 & 57.1 & 69.4\\
Histogram & $5\times 10^7$ & 3338 & 8.2 & 30.6\\
Linear Regression & $10^8$ & 13924 & 13.5 & 39.0\\
Group-By & $5\times 10^7$ & 2766 & 56.6 & 51.9\\
Matrix Addition & $210\times 210$ & 10 & 0.13 & 216.0\\
Matrix Multiplication & $420\times 420$ & 39 & 20.8 & 137.8\\
PageRank & 1500000 & 279 & 10.9 & 44.9\\
KMeans & 500000 & 70 & 32.6 & 26.2\\
Matrix Factorization & $980\times 980$ & 210 & 13.2 & 22.7\\\hline
\end{tabular}
\end{center}
\caption{Parallel (par) vs Sequential (seq) evaluation time in secs}\label{parallel}
\end{table}

To evaluate the quality of our generated code on a distributed
platform, we have tested our system on 12 programs and compared their
evaluation against efficient hand-written programs on Spark.
The platform used for our evaluations was a small cluster of 10 nodes
built on the XSEDE Comet cloud computing infrastructure at SDSC (San
Diego Supercomputer Center). Each Comet node has Xeon E5-2680v3 at 2.5GHz,
with 24 cores, 128GB RAM, and 320GB SSD.
For our experiments, we used Apache Spark
2.2.0 running on Apache Hadoop 2.6.0. 
All experiments were done on random data, stored in Spark RDDs.
Each Spark executor was configured to have 4 cores and 23 GB RAM.
Consequently, there were 5 executors per node, giving a total of 50
executors, from which 2 were reserved for other tasks.  Each program
was evaluated over 5 datasets and each evaluation was repeated 4
times, the first of which was discarded to make sure that the JVM/JIT
warm-up time does not skew the results.  Hence, each data point in the
plots in Figure~\ref{performance-results} represents the mean value of
these 3 evaluations. The dataset size was calculated by multiplying
the dataset length by the size of a dataset element when is serialized
to bytes using Java standard serialization.\extended{ For example, a sparse
matrix of type \s{RDD[((Long,Long),Double)]} with 1000 elements has
size $1000*234$ bytes because \s{((Long,Long),Double)} is serialized
to 234 bytes.} Each program was evaluated in 3 different ways: as a
loop-based program translated by DIABLO to the Spark Core API (RDDs)
(the lines tagged ``DIABLO"), as an equivalent efficient program in
Spark Core written by us (the lines tagged ``hand-written"), and as a
loop-based program translated by \textsc{Casper} to Spark Core, when
such translation is possible (the lines tagged ``Casper").  These programs are available on
GitHub~\cite{diql} and are given in Appendix~\ref{benchmark-programs}.

Conditional Sum filters a dataset \s{V} of type \s{RDD[Double]} that
contains random data and aggregates the result. The Spark code is
\lstinline$V.filter(_ < 100).reduce(_+_)$.
The largest dataset used had $10^9$ elements and size 7.45 GB.
Equal, String Match, and Word Count used the same dataset of
type \s{RDD[String]} that contains random strings of size 4 so that
there were 1000 different strings.  The largest dataset used had
$2\times10^8$ elements and size 1.49 GB.  Equal checks whether all
the strings in the dataset are equal.  String Match checks whether
the dataset contains ``key1", ``key2", or ``key3".  For each
different string in the dataset, WordCount counts how many times this
string occurs.  Histogram scans a dataset \s{P} of RGB pixels of
type \s{RDD[(Int,Int,Int)]}, and for each one of the RGB components,
it creates a histogram.  For instance, the Spark code for the red
component is \lstinline$P.map(_.1).countByValue()$.  The largest dataset
used had $2\times 10^8$ elements and size 10.99 GB.  Linear
Regression takes a dataset of 2-D points of
type \s{RDD[(Double,Double)]} and calculates the intercept and the
slope coefficient that models the dataset.  The data used were points
$(x+dx,x-dx)$, where $x$ is a random double between 0 and 1000 and
$dx$ is a random double between 0 and 10.  The largest dataset used
had $2\times 10^8$ elements and size 27.99 GB.  Group By groups a dataset
of type \s{RDD[(Long,Double)]} by it first component and sums up
the second component. The keys were random long integers with 10
duplicates on the average.  The largest dataset used had $2\times
10^8$ elements and size 8.75 GB. We can see that programs generated by
DIABLO have performance comparable to the hand-written programs and are faster than those
by \textsc{Casper}.

{\em Matrix addition and multiplication:}
The sparse matrices used in our
experiments have type \s{RDD[((Long,Long),Double)]}.
Although sparse, all matrix elements were provided,
were placed in random order, and
were filled with random values between 0.0 and 10.0.
\extended{The DIABLO matrix multiplication
program is given in the Introduction,
while the hand-written Spark program is as follows:
\begin{tabbing}
M.\=\+map{ case ((i,j),m) => (j,(i,m)) }\\
 .join( N.map{ case ((i,j),n) => (i,(j,n)) } )\\
 .map{ case (k,((i,m),(j,n))) => ((i,j),m*n) }\\
 .reduceByKey(\_+\_)
\end{tabbing}}
The matrices used for addition and multiplication
were pairs of square matrices of the same size.
The largest matrices used in addition had $8000\times 8000$ elements
and size 13.83 GB each, while those in
multiplication had $4000\times 4000$ elements and size 3.46 GB each.
The results are shown in Figures~\ref{performance-results}.H and I.
We can see that here too programs generated by
DIABLO have performance comparable to the hand-written programs.

\ignore{
The generated DIABLO code too generated the
same operations but it also included an extra operation to add the
initial values of \s{R[i,j]} in \s{R[i,j] += M[i,k]*N[k,j]} (since it
could not tell that the initial values were 0.0) and to incorporate
the values of \s{R[i,j]} that are not changed during the loop (since
it could not tell that the loop replaced all values).
A more sophisticated optimizer could have caught these
optimizations and removed this operation.}

{\em PageRank:} The PageRank program is one iteration of the page-rank
algorithm that assigns a rank to each vertex of a graph, which measures
its importance relative to the other vertices in the graph.  The
graphs used in our experiments were synthetic data generated by the
RMAT (Recursive MATrix) Graph Generator~\cite{Chakrabarti:sdm04} using
the Kronecker graph generator parameters a=0.30, b=0.25, c=0.25, and
d=0.20. The number of edges generated were 10 times the number of
graph vertices. The largest graph used had $2\times 10^7$ vertices, $2\times 10^8$
edges, and had size 36.32 GB. The results are shown in
Figure~\ref{performance-results}.J. The pagerank step in the
hand-written program was simply a join between the graph and the
current pagerank, followed by a reduceByKey. The generated DIABLO
program though used a triple join among the graph, the current pagerank, and
the node fan-out vector, followed by a reduceByKey.

{\em K-Means clustering:}
The KMeans program computes one iteration step of the K-Means clustering
algorithm, which finds the $K$ centroids of a set of 2-D points on a plane.
The datasets used in our experiments are random points on a plane inside a
$10\times 10$ grid of squares, where each square has a top-left corner
at $(i*2+1,j*2+1)$ and bottom-right corner at $(i*2+2,j*2+2)$, for
$i\in [0,9]$ and $j\in [0,9]$. That is, there should be 100
centroids, which are the square centers $(i*2+1.5,j*2+1.5)$. The
initial centroids were set to be the points $(i*2+1.2,j*2+1.2)$. 
The largest dataset used had $10^7$ data points and size 1.36 GB.
The results are shown in Figure~\ref{performance-results}.K. The
hand-written program broadcasts the initial centroids to all workers
so that each worker keeps a copy in its memory, and then uses a map
followed by a reduceByKey, in which the shuffled data were very small
and of constant size. On the other hand, DIABLO stores the centroids
into an RDD and uses Spark joins to correlate points with centroids,
making the entire process expensive.

{\em Matrix factorization:} The last program to evaluate is
one iteration of matrix factorization using gradient
descent~\cite{koren:comp09}. The loop-based program was given
in Section~\ref{recurrences}.
\ignore{ The hand-writen program
that calculates the matrix factors \s{P} and \s{Q} from \s{R} is as follows:
val E = op( _-_, R, multiply(P,Q) )
P = op( _+_, P, op( _-_, 
    multiply(E,transpose(Q)).mapValues(_*2),
    P.mapValues(_*b) ).mapValues(_*a) )
Q = op( _+_, Q, op( _-_,
    transpose(multiply(transpose(E),P)).mapValues(_*2),
    Q.mapValues(_*b) ).mapValues(_*a) )
where \s{multiply} and \s{transpose} are matrix multiplication and transpose, and
\lstinline$op (f,x,y)$ is \lstinline$x.join(y).mapValues(f)$.}
For our experiments, we used the learning rate $a=0.002$ and the
normalization factor $b=0.02$. The matrix to be factorized, \s{R}, was
a square sparse matrix $n*n$ with random integer values between 1 and
5, in which only the 10\% of the elements were provided (the rest were
implicitly zero). The derived matrices \s{P} and \s{Q} had dimensions
$n*2$ and $2*n$, respectively, and they were initialized with random
values between 0.0 and 1.0. The largest matrix $R$ used had
$8000*8000$ elements and size 13.65 GB. The results are
shown in Figure~\ref{performance-results}.L.

From these experiments, we can see that, except KMeans and Matrix
Factorization, the DIABLO generated programs have performance
comparable to the hand-written programs, although the DIABLO PageRank
performance was erratic.  These three programs are far more complex
than the others, which caused DIABLO to generate some unnecessary
joins. These joins could have been eliminated by a more sophisticated
query optimizer.  The focus of our current work is on generating
correct DISC programs from vector-based loops. We are planning to
explore more effective query optimization techniques in a future work.

\section{Conclusion}

We have addressed the problem of automated parallelization of
array-based loops by translating them to comprehensions, which can
then be translated and optimized to distributed data parallel
operations.  The efficiency of our translations would mostly depend on
the effectiveness of code optimization after translation, which we are
planning to address more thoroughly in a future work. We are also
planning to look at cost-based optimizations, such as determining
whether an array is small enough to fit in a worker's memory in order
to broadcast it to all workers, thus speeding up joins over this
array.\extended{ For example, the centroid vector in the K-means clustering
example was small enough to broadcast.} One source of inefficiency in
our translations is the large number of generated joins. When two
arrays are used together in a program, such as in $A[i]*B[i]$, this
term is translated to a join between $A$ and $B$. This join can be
avoided if we co-partition these two vectors using the same
partitioner. Then, $A[i]*B[i]$ can be implemented using the
zipPartitions operation in Spark, which does not cause any shuffling.
As a future work, we are also planning to experiment with more
platforms as the target of DIABLO, such as translating comprehensions
to Spark SQL, which supports cost-based optimization.
\extended{Finally, as a future work, we want to investigate how to generate
optimal parallel algorithms from loops, such as the SUMMA
algorithm~\cite{summa} for distributed matrix multiplication. For such
optimizations, a template-based approach, where a generic template of
algebraic operations is mapped to an algorithm, may be a more suitable
solution.}

\begin{figure*}
\framebox{
\begin{minipage}[l]{6.78in}
\begin{subequations}
\begin{align}
\intertext{\underline{$\mathcal{E}'\sem{e}_\sigma:$\qquad Translate the expression $e$ to a a comprehension term}}
\mathcal{E}'\sem{V}_\sigma & =\;
\left\{\begin{array}{ll}
\sigma(V) & \mbox{if $V\in\sigma$}\\
\some{V} & \mbox{otherwise}
\end{array}\right.\label{u0}\\
\mathcal{E}'\sem{e}_\sigma & =\;\mathcal{E}\sem{e}\qquad\mbox{\em (as defined in Figure~\ref{translation-rules})}
\end{align}
\end{subequations}
\begin{subequations}
\begin{align}
\nonumber\\[-3.2em]
\intertext{\underline{$\mathcal{U}'\sem{d}_\sigma(x):$\qquad Transform the state $\sigma$ by replacing the destination $d$ with the value $x$}}
\mathcal{U}'\sem{V}_\sigma(x) & =\;\sigma[V=x]\label{uu1}\\
\mathcal{U}'\sem{d.A_i}_\sigma(x) & =\;\mathcal{U}'\sem{d}_\sigma(\ocompr{\lt\,A_1\req w.A_1,\ldots,A_i\req v,\ldots,A_n\req w.A_n\,\gt}{w\from\mathcal{E}'\sem{d}_\sigma,\,v\from x})\label{uu2}\\
\mathcal{U}'\sem{V[e]}_\sigma(x) & =\;\sigma[V=\ocompr{V\lhd\bag{(i,v)}}{i\from\mathcal{E}'\sem{e}_\sigma,\,v\from x}]\label{uu3}\\
\mathcal{U}'\sem{V[e_1,e_2]}_\sigma(x) & =\;\sigma[V=\ocompr{V\lhd\bag{((i,j),v)}}{i\from\mathcal{E}'\sem{e_1}_\sigma,\,j\from\mathcal{E}'\sem{e_2}_\sigma,\,v\from x}]\label{uu4}
\end{align}
\end{subequations}
\begin{subequations}
\begin{align}
\nonumber\\[-3.2em]
\intertext{\underline{$\mathcal{T}\sem{s}_\sigma:$\qquad Translate the parallelizable program $s$ to a term that transforms the state $\sigma$}}
\mathcal{T}\sem{d\,\opluseq e}_\sigma & =\; \mathcal{T}\sem{d\,:=d\oplus e}_\sigma\label{tt1}\\
\mathcal{T}\sem{d:=e}_\sigma & =\;\mathcal{U}'\sem{d}_\sigma(\mathcal{E}'\sem{e}_\sigma)\label{tt2}\\
\mathcal{T}\sem{\mathbf{for}\;v=e_1,e_2\;\mathbf{do}\;s}_\sigma
& =\;(\compr{\lm x.\,\mathcal{T}\sem{s}_x}{v_1\from\mathcal{E}'\sem{e_1}_\sigma,\,v_2\from\mathcal{E}'\sem{e_2}_\sigma,\,v\from\mathrm{range}(v_1,v_2)}_\circ)\,\sigma\label{tt4}\\
\mathcal{T}\sem{\mathbf{for}\;v\;\mathbf{in}\;e\;\mathbf{do}\;s}_\sigma
& =\;(\compr{\lm x.\,\mathcal{T}\sem{s}_x}{A\from\mathcal{E}'\sem{e}_\sigma,\,(i,v)\from A}_\circ)\,\sigma\label{tt5}\\
\mathcal{T}\sem{\mathbf{if}\;(e)\;s_1\;\mathbf{else}\;s_2}_\sigma
& =\;\ocompr{x}{p\from\mathcal{E}'\sem{e}_\sigma,\,x\from\mathbf{if}\;p\;\mathbf{then}\;\mathcal{T}\sem{s_1}_\sigma\;\mathbf{else}\;\mathcal{T}\sem{s_2}_\sigma}\label{tt7}
\end{align}
\end{subequations}
\end{minipage}
}
\caption{Semantics of a parallelizable program}\label{default-semantics}
\end{figure*}

\short{\newpage\balance}
\bibliographystyle{abbrv}

\appendix

\section{Correctness Proofs}\label{proofs}

\begin{customthm}{\ref{for-distribution}}
An affine for-loop satisfies:
\begin{align}
&\textbf{for}\;i=\ldots\;\textbf{do}\;\{\,s_1;\,s_2\,\}\nonumber\\
&\hspace*{3ex}\;=\;\{\,\textbf{for}\;i=\ldots\;\textbf{do}\;s_1;\;\textbf{for}\;i=\ldots\;\textbf{do}\;s_2\;\}\tag{\ref{for-distr}}
\end{align}
\end{customthm}
\begin{proof}
Based on the exception~(a) in Definition~\ref{PFOR-def}, if there is
$d\in\mathcal{W}\sem{s_1}$ (ie, there is an update $d:=e$ in $s_1$)
and $d\in\mathcal{R}\sem{s_2}$ (ie, $d$ is read in $s_2$), then
according to restriction~(1) in Definition~\ref{PFOR-def}, $d$ must be
affine. That is, the location of $d$ is different for different values
of the loop index $i$, which means that there is no interference
across iteration steps.  Hence, we can do all the updates $d:=e$ first
in one loop and then read all $d$ values in a separate loop. Based on
the exception~(b) in Definition~\ref{PFOR-def}, if there is $d\opluseq e$
in $s_1$ and $d\in\mathcal{R}\sem{s_2}$, then $\mathrm{affine}(d,s_2)$ and
$\mathrm{context}(s_1)\cap\mathrm{context}(s_2)=\mathrm{indexes}(d)$.
That is, $i\in\mathrm{indexes}(d)$ since both statements are inside
the for-loop of $i$.  Furthermore, $d$ must be affine in the context of
$s_2$, which contains $i$.  Hence, like the previous case for an
update to $d$, we can calculate all increments $d\opluseq e$ first in
one loop and then read all $d$ values in a separate loop.  For all
other cases, based on restriction~(2) in Definition~\ref{PFOR-def},
there are no interferences between $s_1$ and $s_2$ and,
therefore, the loop can be split into two loops.
\end{proof}

\begin{theorem}[Soundness]\label{correctness-proof}
The transformation rules in Figure~\ref{translation-rules} are meaning preserving
for all programs that satisfy the recurrence restrictions in Definition~\ref{PFOR-def}.
\end{theorem}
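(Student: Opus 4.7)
The plan is to prove meaning preservation by structural induction on the syntax of the source language, using the reference semantics $\mathcal{T}\sem{s}_\sigma$ from Figure~\ref{default-semantics} as the ground truth against which the compositional translation $\mathcal{S}\sem{s}(\bar{q})$ must be validated. First I would establish two auxiliary lemmas by induction on expressions. Lemma A: for every source expression $e$, the term $\mathcal{E}\sem{e}$ evaluates, in any state, to a bag that is either a singleton $\bag{v}$ (when $e$ has a value $v$) or empty (when an array lookup misses). The proof is a routine structural induction following Equations~(\ref{t1})--(\ref{t8}), with the only non-trivial case being array indexing~(\ref{t4}), which relies on the functional nature of the sparse-array representation. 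Lemma B: the semantic function $\mathcal{U}\sem{d}(x)$ from Equations~(\ref{u9})--(\ref{u12}) produces target code whose effect on the state coincides with $\mathcal{U}'\sem{d}_\sigma(x)$ from the reference semantics; again a direct induction on the shape of $d$ using the definitions of $\mathcal{K}$ and $\mathcal{D}$.

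The main induction is on statements, proving the invariant that for any $\bar{q}$ whose generators bind the loop-index variables, the code $\mathcal{S}\sem{s}(\bar{q})$ has the same effect as the reference semantics of executing $s$ in the iteration context described by $\bar{q}$. The base cases are the assignment rules~(\ref{s1}) and~(\ref{s2}), which reduce to Lemmas A and B together with the algebraic fact that a bag comprehension with generator list $\bar{q}$ followed by a bulk update is equivalent to iterating over the bindings of $\bar{q}$ and performing the individual updates in sequence. For the block rule~(\ref{x23}) and the for-loop rules~(\ref{x19}) and~(\ref{x20}), the key ingredient is Theorem~\ref{for-distribution}: it lets us distribute the outer for-loop over the sequenced statements in the body, so that propagating a generator into $\bar{q}$ for each sub-statement is semantically justified. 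The conditional rule~(\ref{x22}) is handled by splitting on the guard, and the while-loop rule~(\ref{x21}) is immediate since while-loops are not parallelized in the target.

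The hard part will be the incremental-update rule~(\ref{s1}) in the presence of multiple enclosing for-loops. Here I must show that the derived group-by on $\mathcal{K}\sem{d}$ with aggregation $\oplus/v$ and a single injection of the initial value $w$ via $\mathcal{D}\sem{d}(k)$ reproduces exactly the cumulative effect of the many sequential increments $d\opluseq e$ performed by the original nested loops. Three obligations must be discharged: (i) every iteration that targets the same destination location $k$ lands in the same group; (ii) the initial value of $d$ at location $k$ is included once and only once per group; (iii) the resulting aggregate equals the sequential fold $d\oplus e_1\oplus e_2\oplus\cdots$. Obligation~(iii) uses the assumed commutativity and associativity of $\oplus$, which is exactly why the Definition~\ref{PFOR-def} language admits only commutative increments: shuffling in a DISC implementation does not preserve order. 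Obligations~(i) and~(ii) are where the restrictions bite. Exception~(b) of Restriction~2 guarantees that $\mathrm{context}(s_1)\cap\mathrm{context}(s_2)=\mathrm{indexes}(d_1)$ whenever an aggregator is later read, so every loop index that contributes to writing $d$ is captured in its destination key, and no read interleaves with the accumulation; hence the reference semantics indeed produces $d\oplus(e_1\oplus\cdots\oplus e_n)$ for the multiset of right-hand sides $\{e_i\}$ generated over all iterations. Restriction~1 plays the analogous role for non-incremental assignments in rule~(\ref{s2}): affineness of the destination ensures that each iteration writes to a distinct key, so the bulk $\lhd$ update in Equation~(\ref{u12}) is free of collisions and agrees with the sequential semantics.

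Putting the pieces together, the induction proceeds by taking each syntactic construct in Figure~\ref{translation-rules}, invoking Theorem~\ref{for-distribution} (whose hypotheses are exactly the recurrence restrictions) to permute the for-loops inward, and then applying the base-case lemmas and the induction hypothesis to the smaller statements. The top-level theorem follows by instantiating the invariant at $\bar{q}=\liste{\,}$. The only delicate bookkeeping beyond the increment case is variable freshness when concatenating qualifier lists with $\app$, which can be handled by the standard capture-avoiding renaming assumed throughout the comprehension calculus of~\cite{jfp17}.
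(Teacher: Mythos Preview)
Your plan is sound and structurally close to the paper's proof, but the decomposition differs. The paper applies Theorem~\ref{for-distribution} once, up front, to rewrite any affine loop into a normal form in which statement blocks have been fully distributed out of for-loops; in that normal form a parallelizable sub-program has no block constructor at all, so the structural induction never meets rule~(\ref{x23}). You instead keep blocks in the syntax and invoke Theorem~\ref{for-distribution} each time the induction crosses a block inside a loop. Both organizations work; the paper's buys a cleaner induction at the price of an explicit normal-form step, while yours stays over the original syntax but must repeatedly re-discharge the for-distribution premise.

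There is one point where your sketch is thinner than the paper. The paper makes the invariant precise via a comprehension over the function-composition monoid, $\Sigma_\sigma(\mathcal{S}\sem{s}(\bar{q}))=(\compr{\lm x.\,\mathcal{T}\sem{s}_x}{\bar{q}}_\circ)\,\sigma$, and closes the for-loop inductive step with the algebraic law $\compr{\compr{f}{\bar{q_1}}_\circ}{\bar{q_2}}_\circ=\compr{f}{\bar{q_2},\bar{q_1}}_\circ$, not with Theorem~\ref{for-distribution}. You name Theorem~\ref{for-distribution} as the key ingredient for rules~(\ref{x19}) and~(\ref{x20}), but that theorem only distributes a loop over a sequence; it does not by itself explain why appending a generator to $\bar{q}$ is equivalent to iterating the body's state transformer. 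You will need some version of the composition-monoid law (or an explicit unrolling argument) to close that step. Similarly, for the incremental-update base case the paper carries out an inner induction on the grouped data to show that the single bulk group-by/reduce equals the fold of per-iteration $\lhd$ updates; your obligations (i)--(iii) isolate exactly the right facts, but discharging them formally will still require that inner induction.
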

\begin{proof}
If we use Equation~(\ref{for-distr}) in Theorem~\ref{for-distribution}
as a rewrite rule from left to right, then any loop-based program can
be put to a normal form that consists of a single sequential program
$ss$ that contains parallelizable for-loops $ps$:
\[\begin{array}{rcll}
\multicolumn{4}{l}{\mbox{\textbf{Sequential Program}:}}\\
ss &::=&\mathbf{var}\;v:t=e & \mbox{declaration}\\
&|&\mathbf{while}\;(e)\;ss & \mbox{loop}\\
&|&\{\,ss_1;\,\ldots;\,ss_n\} & \mbox{statement block}\\
&|& ps\\[1ex]
\multicolumn{4}{l}{\mbox{\textbf{Parallelizable Program}:}}\\
ps & ::= & d\opluseq e & \mbox{incremental update}\\
&|& d:=e & \mbox{assignment}\\
&|&\mathbf{for}\;v=e_1,e_2\;\mathbf{do}\;ps & \mbox{iteration}\\
&|&\mathbf{for}\;v\;\mathbf{in}\;e\;\mathbf{do}\;ps & \mbox{traversal}\\
&|&\mathbf{if}\;(e)\;ps_1\;\mathbf{else}\;ps_2 & \mbox{conditional}
\end{array}\]
The sequential part generated by the rules in
Figure~\ref{translation-rules} is equivalent to the derived sequential
program $ps$. Therefore, to prove Theorem~\ref{correctness-proof}, we
need to prove that the derived parallelizable programs are equivalent
to those generated by the rules in Figure~\ref{translation-rules}. To
prove this equivalence, we have to give formal semantics to the
parallelizable programs in the form of monoid comprehensions and then
prove the equivalence of the derived monoid comprehensions. The formal semantics of a
parallelizable program is given by the rules in
Figure~\ref{default-semantics}. It is based on denotational
semantics, which is often used to ascribe a formal meaning to imperative
programming languages. In denotational semantics, to capture the
meaning of an imperative program, we use a state $\sigma$ to
encapsulate all updatable variables. Then, a statement is translated
to a state transformer, which is a function from the current state to
a new state. If the statement does not cause any side effects, the
state is propagated as is; otherwise, the state is replaced with a new
state that reflects the updates. In our semantics, the state $\sigma$ is a map
from a variable name to a bag of values, where the bag can have 0 or 1 elements.
Then, $\sigma(V)$ returns this bag and $\sigma[V=v]$ replaces the value of the
variable $V$ with the bag $v$.

The Rules~(\ref{tt1}) and~(\ref{tt2}) in
Figure~\ref{default-semantics} translate assignments and incremental updates to
state transformers using Rules~(\ref{uu1})-(\ref{uu4}), which replace
the $\sigma$ component associated with the destination $d$ with a new
value $x$. To capture the sequential semantics of a for-loop, we use
a comprehension $\compr{f}{\bar{q}}_\circ$ over the monoid
$\circ$. The function composition $\circ$ satisfies
$(f_2\circ f_1)(x)=f_2(f_1(x))$, that is, it combines two state
transformers. Hence, the comprehension $\compr{f}{\bar{q}}_\circ$
returns a function, which when applied to a state $\sigma$, it
transforms $\sigma$ to $f(f(\ldots f(\sigma)))$. This means that, if
$f$ is the state transformer associated with the body of a for-loop, then this
comprehension captures the for-loop iteration.

We first prove the following theorem using structural induction on the statement $s$:
\begin{align}
\Sigma_\sigma(\mathcal{S}\sem{s}(\bar{q}))=(\compr{\lm x.\,\mathcal{T}\sem{s}_x}{\bar{q}}_\circ)\,\sigma\label{st-law}
\end{align}
where
$\Sigma_\sigma(\liste{V_1:=v_1,\ldots,V_n:=v_n})=\sigma[V_1=v_1,\ldots,$ $V_n=v_n]$,
that is, $\Sigma_\sigma$ converts a list of updates to a state
transformation. For the induction base case, we will only prove
Equation~(\ref{st-law}) for the statement $V[e_1]\opluseq e_2$. (The
other cases are easier to prove.)  The left side is:
\[\begin{array}{l}
\Sigma_\sigma(\mathcal{S}\sem{V[e_1]\opluseq e_2}(\bar{q}))\\
\why{from Equation~(\ref{s1})}
=\Sigma_\sigma(\mathcal{U}\sem{V[e_1]}(\compr{(k,w\oplus(\oplus/v))}{\bar{q},\,v\from\mathcal{E}\sem{e_2},\\
\hspace*{3ex}k\from\mathcal{K}\sem{V[e_1]},\;\mathbf{group\ by}\,k,\,w\from\mathcal{D}\sem{V[e_1]}(k)}))\\
\why{from Equation~(\ref{u12})}
=\sigma[V=\some{\sigma.V\lhd\compr{(k,w\oplus(\oplus/v))}{\bar{q},\,v\from\mathcal{E}\sem{e_2},\\
\hspace*{3ex}k\from\mathcal{K}\sem{V[e_1]},\;\mathbf{group\ by}\,k,\,w\from\mathcal{D}\sem{V[e_1]}(k)}}]\\
\why{from Equations~(\ref{k4}) and~(\ref{d4})}
=\sigma[V=\some{\sigma.V\lhd\compr{(k,w\oplus(\oplus/v))}{\bar{q},\,v\from\mathcal{E}\sem{e_2},\\
\hspace*{3ex} k\from\mathcal{E}\sem{e_1},\;\mathbf{group\ by}\,k,\,(i,w)\from\sigma.V,\,i=k}}]\\
=\sigma[V=\some{\sigma.V\lhd\compr{(k,w\oplus(\oplus/s))}{(k,s)\from G,\\
\hspace*{20ex}(i,w)\from\sigma.V,\,i=k}}]\\
\why{defined as:}
=M(\sigma,G)
\end{array}\]
where the key-value map $G$ is:
\begin{align*}
G &=\compr{(k,v)}{\bar{q},\,v\from\mathcal{E}\sem{e_2},\,k\from\mathcal{E}\sem{e_1},\;\mathbf{group\ by}\,k}
\end{align*}
The right side of Equation~(\ref{st-law}) is:
\[\begin{array}{l}
(\compr{\lm x.\,\mathcal{T}\sem{V[e_1]\opluseq e_2}_x}{\bar{q}}_\circ)\,\sigma\\
\hspace*{6ex}\mbox{{\em (from Equations~(\ref{tt1}) and~(\ref{tt2})}}\\
\hspace*{6ex}\mbox{{\em \ \ and after normalization)}}\\
=(\compr{\lm x.\,\mathcal{U}'\sem{V[e_1]}_x(\mathcal{E}'\sem{V[e_1]\oplus e_2}_x)}{\bar{q}}_\circ)\,\sigma\\
\why{from Equations~(\ref{uu3}) and~(\ref{t4})}
=(\compr{\lm x.\,x[V=x.V\lhd\ocompr{(j,w\oplus v)}{j\from\mathcal{E}'\sem{e_1}_x,\\
\hspace*{8ex}k\from\mathcal{E}'\sem{e_1}_x,\,(i,w)\from x.V,\,i=k,\\
\hspace*{8ex}v\from\mathcal{E}'\sem{e_2}_x}]}{\bar{q}}_\circ)\,\sigma\\
\hspace*{6ex}\mbox{{\em (after removing the repeated generator}}\\
\hspace*{6ex}\mbox{{\em \ \ and moving the last generator)}}\\
=(\compr{\lm x.\,x[V=x.V\lhd\ocompr{(k,w\oplus v)}{v\from\mathcal{E}'\sem{e_2}_x,\\
\hspace*{3ex}k\from\mathcal{E}'\sem{e_1}_x,\,(i,w)\from x.V,\,i=k}]}{\bar{q}}_\circ)\,\sigma\\
\hspace*{6ex}\mbox{{\em ($\mathcal{E}'\sem{e_1}_x=\mathcal{E}'\sem{e_1}_\sigma$ and $\mathcal{E}'\sem{e_2}_x=\mathcal{E}'\sem{e_2}_\sigma$}}\\
\hspace*{6ex}\mbox{{\em \ \ because $e_1$ and $e_2$ do not interfere with $V$)}}\\
=(\compr{\lm x.\,x[V=x.V\lhd\ocompr{(k,w\oplus v)}{v\from\mathcal{E}'\sem{e_2}_\sigma,\\
\hspace*{3ex}k\from\mathcal{E}'\sem{e_1}_\sigma,\,(i,w)\from x.V,\,i=k}]}{\bar{q}}_\circ)\,\sigma\\
=(\compr{\lm x.\,x[V=x.V\lhd\ocompr{(k,w\oplus v)}{(i,w)\from x.V,\,i=k}]}\\
\skiptext{=(}{\bar{q},\,v\from\mathcal{E}'\sem{e_2}_\sigma,\,k\from\mathcal{E}'\sem{e_1}_\sigma}_\circ)\,\sigma\\
\why{by unnesting the group-by $G$}
=(\compr{\lm x.\,x[V=x.V\lhd\ocompr{(k,w\oplus v)}{(i,w)\from x.V,\,i=k}]\\
\skiptext{=(}}{(k,s)\from G,\,v\from s}_\circ)\,\sigma\\
\why{defined as:}
=N(\sigma,G)
\end{array}\]
To prove that $M(\sigma,G)=N(\sigma,G)$,
we use induction over $G$. It is easy to prove this equality for an empty and a unary $G$.
For $G=G_1\gb_\uplus G_2$, we assume the equality is true for $G_1$ and $G_2$ (induction hypotheses)
and prove it for $G$ (induction step):
\[\begin{array}{l}
M(\sigma,G)=M(\sigma,G_1\gb_\uplus G_2)\\
=\sigma[V=\some{\sigma.V\lhd\compr{(k,w\oplus(\oplus/s))}{(k,s)\from (G_1\gb_\uplus G_2),\\
\hspace*{20ex}(i,w)\from\sigma.V,\,i=k}}]\\
=\sigma[V=\some{\sigma.V\lhd\compr{(k,w\oplus(\oplus/s))}{(k,s)\from G_1,\\
\hspace*{20ex}(i,w)\from\sigma.V,\,i=k}\\
\skiptext{=\sigma[V=\{\sigma.V}\lhd\compr{(k,w\oplus(\oplus/s))}{(k,s)\from G_2,\\
\hspace*{20ex}(i,w)\from\sigma.V,\,i=k}}]\\
=(\lm x.\,M(x,G_2))(M(\sigma,G_1))\\
\why{induction hypotheses}
=(\lm x.\,N(x,G_2))(N(\sigma,G_1))\\
=N(\sigma,G_1\gb_\uplus G_2)=N(\sigma,G)
\end{array}\]

We will now use the following law:
\begin{align}
\compr{\compr{f}{\bar{q_1}}_\circ}{\bar{q_2}}_\circ = \compr{f}{\bar{q_2},\,\bar{q_1}}_\circ\label{circ-law}
\end{align}
to prove Equation~(\ref{st-law}) for a statement $\mathbf{for}\;v=e_1,e_2\;\mathbf{do}\;s$ (induction step),
assuming that it is true for $s$ (induction hypothesis):
\[\begin{array}{l}
\Sigma_\sigma(\mathcal{S}\sem{\mathbf{for}\;v=e_1,e_2\;\mathbf{do}\;s}(\bar{q}))\\
\why{from Equation~(\ref{x19})}
=\Sigma_\sigma(\mathcal{S}\sem{s}(\bar{q}\app\liste{v_1\from\mathcal{E}\sem{e_1},\,v_2\from\mathcal{E}\sem{e_2},\\
\skiptext{=\Sigma_\sigma(\mathcal{S}\sem{s}(\bar{q}\app [\,}v\from\mathrm{range}(v_1,v_2)}))\\
\why{induction hypothesis}
=\compr{\lm x.\,\mathcal{T}\sem{s}_x}{\bar{q},\,v_1\from\mathcal{E}\sem{e_1},\,v_2\from\mathcal{E}\sem{e_2},\\
\skiptext{=\comprr{\lm x.\,\mathcal{T}\sem{s}_x}}v\from\mathrm{range}(v_1,v_2)}_\circ\,\sigma\\
\why{from Equation~(\ref{circ-law})}
=(\compr{\lm x.\,(\compr{\lm z.\,\mathcal{T}\sem{s}_z}{v_1\from\mathcal{E}'\sem{e_1}_x,\,v_2\from\mathcal{E}'\sem{e_2}_x,\\
\skiptext{a\comprr{\lm x.\,(\comprr{\lm z.\,\mathcal{T}\sem{s}_z}}}v\from\mathrm{range}(v_1,v_2)}_\circ)\,x}{\bar{q}}_\circ)\,\sigma\\
\why{from Equation~(\ref{tt4})}
=(\compr{\lm x.\,\mathcal{T}\sem{\mathbf{for}\;v=e_1,e_2\;\mathbf{do}\;s}_x}{\bar{q}}_\circ)\,\sigma
\end{array}\]
There is a similar proof for the other cases.

The correctness of Theorem~\ref{correctness-proof} comes directly from Equation~(\ref{st-law}) when
$\bar{q}$ is empty:
\[\Sigma_\sigma(\mathcal{S}\sem{s}(\liste{\,}))=(\compr{\lm x.\,\mathcal{T}\sem{s}_x}{}_\circ)\,\sigma
=\mathcal{T}\sem{s}_\sigma\]
which correctly captures the meaning of $s$.
\end{proof}

\section{Benchmark Programs}\label{benchmark-programs}

\renewcommand{\paragraph}[1]{\noindent {\bf #1:}}

\paragraph{Conditional Sum in Spark}
\begin{lstlisting}[language=scala]
V.filter( _ < 100).reduce(_+_)
\end{lstlisting}

\paragraph{Conditional Sum in DIABLO}
\begin{lstlisting}[language=java]
var sum: Double = 0.0;

for v in V do
    if (v < 100)
       sum += v;
\end{lstlisting}

\paragraph{Equal in Spark}
\begin{lstlisting}[language=scala]
val x = V.first()
V.map(_ == x).reduce(_&&_)
\end{lstlisting}

\paragraph{Equal in DIABLO}
\begin{lstlisting}[language=java]
var eq: Boolean = true;

for v in V do
    eq := eq && v == x;
\end{lstlisting}

\paragraph{String Match in Spark}
\begin{lstlisting}[language=scala]
val key1 = "key1"
val key2 = "key2"
val key3 = "key3"
words.map{ w => (w == key1)
                || (w == key2)
                || (w == key3) }
     .reduce(_||_)
\end{lstlisting}

\paragraph{String Match in DIABLO}
\begin{lstlisting}[language=java]
var c: Boolean = false;

for w in words do
    c := c || (w == key1 || w == key2
           || w == key3);
\end{lstlisting}

\paragraph{Word Count in Spark}
\begin{lstlisting}[language=scala]
words.map((_,1)).reduceByKey(_+_)
\end{lstlisting}

\paragraph{Word Count in DIABLO}
\begin{lstlisting}[language=java]
var C: map[String,Int] = map();

for w in words do
    C[w] += 1;
\end{lstlisting}

\paragraph{Histogram in Spark}
\begin{lstlisting}[language=scala]
case class Color ( red: Int, green: Int, blue: Int )
val R = P.map(_.red).countByValue()
val G = P.map(_.green).countByValue()
val B = P.map(_.blue).countByValue()
\end{lstlisting}

\paragraph{Histogram in DIABLO}
\begin{lstlisting}[language=java]
var R: map[Int,Int] = map();
var G: map[Int,Int] = map();
var B: map[Int,Int] = map();

for p in P do {
    R[p.red] += 1;
    G[p.green] += 1;
    B[p.blue] += 1;
};
\end{lstlisting}

\paragraph{Linear Regression in Spark}
\begin{lstlisting}[language=scala]
val x_bar = P.map(_._1).reduce(_+_)/n
val y_bar = P.map(_._2).reduce(_+_)/n

val xx_bar = P.map(x => (x._1-x_bar)*(x._1-x_bar))
              .reduce(_+_)
val yy_bar = P.map(y => (y._2-y_bar)*(y._2-y_bar))
              .reduce(_+_)
val xy_bar = P.map(p => (p._1-x_bar)*(p._2-y_bar))
              .reduce(_+_)
val slope = xy_bar/xx_bar
val intercept = y_bar - slope * x_bar
\end{lstlisting}

\paragraph{Linear Regression in DIABLO}
\begin{lstlisting}[language=java]
var sum_x: Double = 0.0;
var sum_y: Double = 0.0;
var x_bar: Double = 0.0;
var y_bar: Double = 0.0;
var xx_bar: Double = 0.0;
var yy_bar: Double = 0.0;
var xy_bar: Double = 0.0;
var slope: Double = 0.0;
var intercept: Double = 0.0;

for p in P do {
    sum_x += p._1;
    sum_y += p._2;
};

x_bar := sum_x/n;
y_bar := sum_y/n;

for p in P do {
    xx_bar += (p._1-x_bar)*(p._1-x_bar);
    yy_bar += (p._2-y_bar)*(p._2-y_bar);
    xy_bar += (p._1-x_bar)*(p._2-y_bar);
};

slope := xy_bar/xx_bar;
intercept := y_bar-slope*x_bar;
\end{lstlisting}

\paragraph{Group-by in Spark}
\begin{lstlisting}[language=scala]
case class GB ( K: Long, A: Double )
V.map{ case GB(k,v) => (k,v) }.reduceByKey(_+_)
\end{lstlisting}

\paragraph{Group-by in DIABLO}
\begin{lstlisting}[language=java]
var C: vector[Double] = vector();

for v in V do
    C[v.K] += v.A;
\end{lstlisting}

\paragraph{Matrix Addition in Spark}
\begin{lstlisting}[language=scala]
M.join(N).mapValues{ case (m,n) => n + m }
\end{lstlisting}

\paragraph{Matrix Addition in DIABLO}
\begin{lstlisting}[language=java]
var R: matrix[Double] = matrix();

for i = 0, n-1 do
    for j = 0, mm-1 do
        R[i,j] := M[i,j]+N[i,j];
\end{lstlisting}

\paragraph{Matrix Multiplication in Spark}
\begin{lstlisting}[language=scala]
M.map{ case ((i,j),m) => (j,(i,m)) }
 .join( N.map{ case ((i,j),n) => (i,(j,n)) } )
 .map{ case (k,((i,m),(j,n))) => ((i,j),m*n) }
 .reduceByKey(_+_)
\end{lstlisting}

\paragraph{Matrix Multiplication in DIABLO}
\begin{lstlisting}[language=java]
var R: matrix[Double] = matrix();

for i = 0, n-1 do
    for j = 0, n-1 do {
        R[i,j] := 0.0;
        for k = 0, mm-1 do
            R[i,j] += M[i,k]*N[k,j];
};
\end{lstlisting}

\paragraph{PageRank in Spark}
\begin{lstlisting}[language=scala]
val links = E.map(_._1).groupByKey().cache()
var ranks = links.mapValues(v => 1.0/vertices)

for (i <- 1 to num_steps) {
    val contribs
      = links.join(ranks).values.flatMap {
             case (urls, rank)
               => val size = urls.size
                  urls.map(url => (url, rank / size))
        }
    ranks = contribs.reduceByKey(_ + _)
                .mapValues(0.15/vertices + 0.85 * _)
}
\end{lstlisting}

\paragraph{PageRank in DIABLO}
\begin{lstlisting}[language=java]
var P: vector[Double] = vector();
var C: vector[Int] = vector();
var N: Int = vertices;
var b: Double = 0.85;

for i = 1, N do {
    C[i] := 0;
    P[i] := 1.0/N;
};

for i = 1, N do
    for j = 1, N do
       if (E[i,j])
          C[i] += 1;

var k: Int = 0;

while (k < num_steps) {
  var Q: matrix[Double] = matrix();
  k += 1;
  for i = 1, N do
    for j = 1, N do
        if (E[i,j])
           Q[i,j] := P[i];
  for i = 1, N do
      P[i] := (1-b)/N;
  for i = 1, N do
      for j = 1, N do
          P[i] += b*Q[j,i]/C[j];
};
\end{lstlisting}

\paragraph{KMeans Clustering in Spark}
\begin{lstlisting}[language=scala]
var centroids = initial_centroids

def distance ( x, y )
  = Math.sqrt((x._1-y._1)*(x._1-y._1)
              +(x._2-y._2)*(x._2-y._2))

case class Avg ( sum: (Double,Double), count: Long ) {
  def ^^ ( x: Avg ): Avg
    = Avg((sum._1+x.sum._1,sum._2+x.sum._2),count+x.count)
  def value(): (Double,Double)
    = (sum._1/count,sum._2/count)
}

case class ArgMin ( index: Long, distance: Double ) {
  def ^ ( x: ArgMin ): ArgMin
    = if (distance <= x.distance) this else x
}

for ( i <- 1 to num_steps )
   centroids
     = points.map { p => (centroids.minBy(distance(p,_)),
                          Avg(p,1)) }
             .reduceByKey(_ ^^ _)
             .map(_._2.value())
             .collect()
\end{lstlisting}

\paragraph{KMeans Clustering in DIABLO}
\sf{C} and \s{Avg} are defined as Scala Arrays.
\begin{lstlisting}[language=java]
var closest: vector[ArgMin] = vector();

var steps: Int = 0;
while (steps < num_steps) {
   steps += 1;
   for i = 0, N-1 do {
       closest[i] := ArgMin(0,10000.0);
       for j = 0, K-1 do
           closest[i]
              := closest[i]
                 ^ ArgMin(j,distance(P[i],C[j]));
       avg[closest[i].index]
          := avg[closest[i].index] ^^ Avg(P[i],1);
   };
   for i = 0, K-1 do
       C[i] := avg[i].value();
};
\end{lstlisting}

\paragraph{Matrix Factorization in Spark}
\begin{lstlisting}[language=scala]
def transpose ( x )
  = x.map{ case ((i,j),v) => ((j,i),v) }

def op ( f: (Double,Double) => Double, x, y )
  = x.join(y).mapValues{ case ((a,b)) => f(a,b) }

def multiply ( x, y )
  = x.map{ case ((i,j),m) => (j,(i,m)) }
     .join( y.map{ case ((i,j),n) => (i,(j,n)) } )
     .map{ case (k,((i,m),(j,n))) => ((i,j),m*n) }
     .reduceByKey(_+_)

for ( i <- 1 to num_steps ) {
  val E = op( _-_, R, multiply(P,Q) ).cache()
  P = op( _+_, P, op( _-_, 
      multiply(E,transpose(Q)).mapValues(_*2),
      P.mapValues(_*b) ).mapValues(_*a) ).cache()
  Q = op( _+_, Q, op( _-_,
      transpose(multiply(transpose(E),P)).mapValues(_*2),
      Q.mapValues(_*b) ).mapValues(_*a) ).cache()
}
\end{lstlisting}

\paragraph{Matrix Factorization in DIABLO}
\begin{lstlisting}[language=java]
var P: matrix[Double] = matrix();
var Q: matrix[Double] = matrix();
var pq: matrix[Double] = matrix();
var E: matrix[Double] = matrix();

var steps: Int = 0;
while ( steps < num_steps ) {
  steps += 1;
  for i = 0, n-1 do
      for j = 0, m-1 do {
          pq[i,j] := 0.0;
          for k = 0, d-1 do
              pq[i,j] += P[i,k]*Q[k,j];
          E[i,j] := R[i,j]-pq[i,j];
          for k = 0, d-1 do {
              P[i,k] := P[i,k] ^ (2*a*E[i,j]*Q[k,j]);
              Q[k,j] := Q[k,j] ^ (2*a*E[i,j]*P[i,k]);
          };
      };
};
\end{lstlisting}

\end{document}